\theoremstyle{plain}
\newtheorem{lem}{Lemma}
\newtheorem{prop}{Proposition}
\newtheorem{thm}{Theorem}
\theoremstyle{definition}
\newtheorem{defin}{Definition}
\newcommand{\exNt}[1]{ ||{#1}||_2}
\newcommand{\exNo}[1]{ ||#1||_1}
\def\showauthornotes{0}
\def\showkeys{0}
\def\showdraftbox{0}
\def\showcolorlinks{1}
\def\usemicrotype{1}
\def\showfixme{0}
\newtheorem{theorem}{Theorem}[section]
\newtheorem*{theorem*}{Theorem}
\newtheorem*{proposition*}{Proposition}
\newtheorem{lemma}[theorem]{Lemma}
\newtheorem*{lemma*}{Lemma}
\newtheorem*{conjecture*}{Conjecture}
\newtheorem{fact}[theorem]{Fact}
\newtheorem*{fact*}{Fact}
\newtheorem{hypothesis}[theorem]{Hypothesis}
\newtheorem*{hypothesis*}{Hypothesis}
\theoremstyle{definition}
\newtheorem{definition}[theorem]{Definition}
\theoremstyle{remark}
\newtheorem*{claim*}{Claim}
\newtheorem*{remark*}{Remark}
\newtheorem*{observation*}{Observation}
\newcommand{\savehyperref}[2]{\texorpdfstring{\hyperref[#1]{#2}}{#2}}
\newcommand{\Sref}[1]{\hyperref[#1]{\S\ref*{#1}}}
\let\nfrac=\nicefrac
\newcommand{\half}{\nicefrac12}
\newcommand{\Authornote}[2]{{\sffamily\small\color{red}{[#1: #2]}}}
\newcommand{\Authorcomment}[2]{{\sffamily\small\color{gray}{[#1: #2]}}}
\newcommand{\Authorstartcomment}[1]{\sffamily\small\color{gray}[#1: }
\newcommand{\Authorfnote}[2]{\footnote{\color{red}{#1: #2}}}
\newcommand{\Authorfixme}[1]{\Authornote{#1}{\textbf{??}}}
\newcommand{\Authormarginmark}[1]{\marginpar{\textcolor{red}{\fbox{\Large #1:!}}}}
\newcommand{\Authornote}[2]{}
\newcommand{\Authorcomment}[2]{}
\newcommand{\Authorstartcomment}[1]{}
\newcommand{\Authorfnote}[2]{}
\newcommand{\Authorfixme}[1]{}
\newcommand{\Authormarginmark}[1]{}
\newcommand{\norm}[1]{\lVert#1\rVert}
\newcommand{\textparen}[1]{\text{(#1)}}
\newcommand{\because}[1]{\textparen{because #1}}
\renewcommand{\because}[1]{\textparen{because #1}}
\newcommand{\symdiff}{\Delta}
\newcommand{\mper}{\,.}
\newcommand\bdot\bullet
\DeclareMathOperator{\vol}{vol}
\newcommand{\N}{\mathbb N}
\newcommand{\R}{\mathbb R}
\renewcommand{\leq}{\leqslant}
\renewcommand{\le}{\leqslant}
\renewcommand{\geq}{\geqslant}
\renewcommand{\ge}{\geqslant}
\let\epsilon=\varepsilon
\numberwithin{equation}{section}
\newcommand{\MYstore}[2]{%
  \global\expandafter \def \csname MYMEMORY #1 \endcsname{#2}%
}
\newcommand{\MYload}[1]{%
  \csname MYMEMORY #1 \endcsname%
}
\newcommand{\MYnewlabel}[1]{%
  \newcommand\MYcurrentlabel{#1}%
  \MYoldlabel{#1}%
}
\newcommand{\MYdummylabel}[1]{}
\newcommand{\torestate}[1]{%
  \let\MYoldlabel\label%
  \let\label\MYnewlabel%
  #1%
  \MYstore{\MYcurrentlabel}{#1}%
  \let\label\MYoldlabel%
}
\newcommand{\restatetheorem}[1]{%
  \let\MYoldlabel\label
  \let\label\MYdummylabel
  \begin{theorem*}[Restatement of \prettyref{#1}]
    \MYload{#1}
  \end{theorem*}
  \let\label\MYoldlabel
}
\newcommand{\restatelemma}[1]{%
  \let\MYoldlabel\label
  \let\label\MYdummylabel
  \begin{lemma*}[Restatement of \prettyref{#1}]
    \MYload{#1}
  \end{lemma*}
  \let\label\MYoldlabel
}
\newcommand{\restateprop}[1]{%
  \let\MYoldlabel\label
  \let\label\MYdummylabel
  \begin{proposition*}[Restatement of \prettyref{#1}]
    \MYload{#1}
  \end{proposition*}
  \let\label\MYoldlabel
}
\newcommand{\restatefact}[1]{%
  \let\MYoldlabel\label
  \let\label\MYdummylabel
  \begin{fact*}[Restatement of \prettyref{#1}]
    \MYload{#1}
  \end{fact*}
  \let\label\MYoldlabel
}
\newcommand{\restate}[1]{%
  \let\MYoldlabel\label
  \let\label\MYdummylabel
  \MYload{#1}
  \let\label\MYoldlabel
}
\newcommand{\sse}{\subseteq}
\newcommand{\eps}{\epsilon}
\let\origparagraph\paragraph
\renewcommand{\paragraph}[1]{\origparagraph{#1.}}
\title{Gap Amplification for Small-Set Expansion via Random Walks}
\author{Prasad Raghavendra\thanks{University of California, Berkeley. Research supported by NSF Career Award and Alfred Sloan P. Fellowship. Email: \url{prasad@cs.berkeley.edu}}
and Tselil Schramm\thanks{University of California, Berkeley. This material is
based upon work supported by a Berkeley Chancellor's Fellowship and the National
Science Foundation Graduate Research Fellowship Program under Grant No. DGE
1106400. Email:
\url{tschramm@cs.berkeley.edu}}}
\date{} 
\begin{document}
\maketitle

\abstract{In this work, we achieve gap amplification for the Small-Set
	Expansion problem. Specifically, we show that an instance of
	the Small-Set Expansion Problem with completeness $\epsilon$
	and soundness $\frac{1}{2}$ is at least as difficult as
	Small-Set Expansion with completeness $\epsilon$ and soundness
	$f(\epsilon)$, for any function $f(\epsilon)$ which grows
	faster than $\sqrt{\epsilon}$. We achieve this amplification
	via random walks -- the output graph corresponds to taking
	random walks on the original graph. An interesting feature of our reduction is that unlike gap amplification via parallel repetition,  the size of the instances (number of vertices) produced by the reduction remains the same.
}
\newpage
\section{Introduction}

The small-set expansion problem refers to the problem of approximating
the edge expansion of small sets in a graph.  Formally, given a graph
$G = (V,E)$ and a subset of vertices $S \sse V$ with $|S| \leq |V|/2$, the edge expansion of
$S$ is
$$\phi(S) = \frac{E(S,\bar{S})}{\vol(S)},
$$
where $\vol(S)$
refers to the fraction of all edges of the graph that are incident on the subset $S$.
The edge expansion of the graph $G$ is given by
$ \phi_G = \min_{S \sse V, \vol(S) \leq \half} \phi(S)$.
The problem of approximating the value of $\phi_G$ is the
well-studied uniform sparsest cut problem \cite{LeightonR99,
AroraRV04, AroraLN05}.

In the small-set expansion problem, the goal is to approximate the
edge expansion of the graph at a much finer granularity.
Specifically, for $\delta > 0$ define the parameter
$\phi_G(\delta)$ as follows:
$$
\phi_{G}(\delta) =
\min_{S \sse V, \vol(S) \leq \delta } \phi(S).
$$
The problem of approximating $\phi_G(\delta)$ for all $\delta > 0$ is
the small-set expansion problem.

The small-set expansion problem has received considerable attention in
recent years due to its close connections to the unique games
conjecture.  To describe this connection, we will define a gap version
of the problem.
\begin{defin} \label{def:sse}
For constants $0 < s < c < 1$ and $\delta > 0$, the $SSE_\delta(c,s)$
problem is defined as follows: Given a graph $G = (V,E)$ distinguish
between the following two cases:
\begin{itemize}
  \item $G$ has a set $S$ with $\vol(S) \in [\delta/2,\delta]$ with expansion less than $1 - c$
\item All sets $S$ with $\vol(S) \leq \delta$ in $G$ have expansion at least $1 - s$.
\end{itemize}
\end{defin}
We will omit the subscript $\delta$ and write $SSE(c,s)$ when we refer
to the $SSE_{\delta}(c,s)$ problem for all constant $\delta > 0$.

Recent work by Raghavendra and Steurer \cite{RaghavendraS10}
introduced the following hardness assumption and showed that it
implies the unique games conjecture.
\begin{hypothesis}
	For all $\epsilon > 0$, there exists $\delta > 0$ such that
	$SSE_{\delta}(1-\epsilon,\epsilon)$ is $NP$-hard.
\end{hypothesis}

\begin{theorem} \cite{RaghavendraS10}
	The small set expansion hypothesis implies the unique games
	conjecture.
\end{theorem}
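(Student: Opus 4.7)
The plan is to exhibit a polynomial-time many-one reduction from $SSE_\delta(1-\epsilon,\epsilon)$ to the Unique Games problem, producing a UG instance with alphabet size $R=R(\epsilon)$, completeness $1-O(\epsilon)$, and soundness $O(\epsilon)$ — which is exactly the form required by the Unique Games Conjecture.

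Given an SSE instance $G=(V,E)$, I would build the UG instance $\Phi(G)$ in two steps. First, form an auxiliary graph $H$ by taking a structured product of $G$ with some symmetric base graph on the alphabet $[R]$ (for instance, the Cayley graph of the symmetric group $S_R$ with a natural generating set, or a noisy hypercube on $\{0,1\}^R$). The vertices of $H$ then carry both a spatial coordinate in $V$ and a label coordinate in $[R]$. Second, read off a UG instance: the UG variables correspond to $V$, the alphabet is $[R]$, and each edge of $H$ contributes a UG constraint whose permutation is inherited from the product structure.

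For completeness, a small non-expanding set $S\subseteq V$ with $\vol(S)\in[\delta/2,\delta]$ and expansion at most $\epsilon$ lifts to a non-expanding set in $H$; reading off the labels from this lift produces a UG labeling that satisfies at least $1-O(\epsilon)$ of the constraints, since only edges crossing the boundary of $S$ can fail. For soundness, suppose for contradiction that some labeling $\ell:V\to[R]$ satisfies more than an $\epsilon$ fraction of UG constraints. Using an influence/Fourier decoding on the per-vertex label distributions, I would extract a subset $T\subseteq V$ of volume approximately $\delta$ and expansion at most $1-\epsilon$, which would contradict the SSE guarantee.

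The hard part will be this soundness step: translating a high-value UG labeling into a genuine non-expanding subset of $V$ in the original SSE instance. This typically requires a hypercontractive (or reverse noise-stability) inequality tailored to $G$, ensuring that high-influence coordinates of $\ell$ coalesce into a subset whose expansion in $G$ can be controlled. Balancing the alphabet size $R$, the spectrum of the base graph, and the decoding thresholds so that the loss between SSE soundness $\epsilon$ and UG soundness $\epsilon$ is absorbed in constants — while keeping the reduction polynomial-time — is the main technical challenge. The decoding argument would be somewhat analogous to the invariance-style soundness proofs used in classical UG-based hardness results, but run in the opposite direction, using the ambient small-set-expansion property of $G$ itself as the random-partition gadget rather than a separate dictatorship test.
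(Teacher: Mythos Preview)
First, note that this paper does not prove the theorem: it is quoted from \cite{RaghavendraS10} with no argument given here, so there is nothing in the present paper to compare your sketch against. Any comparison has to be with the original Raghavendra--Steurer reduction.

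On its own terms, your proposal has a genuine gap already in the completeness direction. You take the UG variable set to be $V$ with alphabet $[R]$, and assert that a non-expanding $S\subseteq V$ with $\vol(S)\approx\delta$ ``lifts'' to a high-value UG labeling. But a UG labeling is a function $\ell\colon V\to[R]$, and neither $S$ nor any lift of it to your product graph $H$ (such as $S\times\{i_0\}$ or $S\times[R]$) picks out one label per vertex of $V$. In the label-extended picture, a labeling of value $1-O(\epsilon)$ is a subset of $V\times[R]$ meeting \emph{every} fiber $\{v\}\times[R]$ in one point and having expansion $O(\epsilon)$; any lift of $S$ lives over only a $\delta$-fraction of the fibers, so the resulting partial assignment has UG value $O(\delta)$, not $1-O(\epsilon)$. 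The actual Raghavendra--Steurer construction resolves exactly this by making the UG variable set $V^R$ (tuples) with $R\approx 1/\delta$: a random tuple $(u_1,\dots,u_R)$ then has, with constant probability, a \emph{unique} coordinate lying in $S$, and that coordinate index is the label. The smallness $\vol(S)\approx 1/R$ is precisely what makes the label well-defined on most of $V^R$, and non-expansion of $S$ makes it consistent across constraints. Your product-with-a-base-graph scheme has no analogue of this tuple mechanism, so the reduction breaks at completeness before the soundness analysis is even reached.
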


Moreover, the small set expansion hypothesis is shown to be equivalent
to a variant of the Unique Games Conjecture wherein the input instance
is promised to be a small-set expander \cite{RaghavendraST12}.  Assuming
the small-set expansion hypothesis, hardness results have been
obtained for several problems including Balanced Separator, Minimum
Linear Arrangment \cite{RaghavendraST12} and the problem of approximating vertex expansion
\cite{LouisRV13}.

In this work, we will be concerned with gap amplification for the
small set expansion problem.  Gap amplification
refers to an efficient reduction that takes a weak hardness result for a problem
$\Pi$ with a small gap between the completeness and soundness and
produces a strong hardness with a much larger gap.  Formally, this is achieved
via an efficient reduction from instances of problem $\Pi$ to {\it
harder} instances of the same problem $\Pi$.  Gap amplification is a crucial
step in proving hardness of approximation results.  An important
example of gap amplification is the parallel repetition of $2$-prover
1-round games or Label Cover.  Label cover is a constraint
satisfaction problem which is the starting point for a
large number of reductions in hardness of approximation
\cite{Hastad01}.  Starting with the PCP theorem, one obtains a weak
hardness for label cover with a gap of $1$ vs $1-\beta_0$ for some
tiny absolute constant $\beta_0$ \cite{AroraLMSS98}.  Almost all
label-cover based hardness results rely on the much stronger  $1$ vs
$\epsilon$ hardness for label cover obtained by gap amplification via
the parallel repetition theorem of Raz \cite{Raz98}.  More recently,
there have been significant improvements and simplifications to the
parallel repetition theorem \cite{Rao08,Holenstein07,DinurS13}.

It is unclear if parallel repetition could be used for gap
amplification for small set expansion.  Given a graph $G$, the
parallel repetition of $G$ would consist of the product graph $G^R$
for some large constant $R$.  Unfortunately, the product graph $G^R$
can have small non-expanding sets even if $G$ has no small
non-expanding sets.  For instance, if $G$ has a balanced cut
then $G^R$ could have a non-expanding set of volume $\frac{1}{2^R}$.

In this work, we show that random walks can be used to achieve gap
amplification for small set expansion.  Specifically, given a graph
$G$ the gap amplification procedure constructs $G^t$ on the same set
of vertices as $G$, but with edges corresponding to $t$-step lazy random
walks in $G$.  Using this approach, we are able to achieve the
following gap amplification.
\begin{thm} \label{thm:amplification}
Let $f$ be any function such that $\lim_{\epsilon \to 0} \frac{f(\epsilon)}{\sqrt{\epsilon}} \to \infty$. Then

If for all $\epsilon > 0$, $SSE'(1-\epsilon, 1-f(\epsilon))$ is NP-hard
then
for all $\eta > 0$, $SSE(1-\eta, \half)$ is NP-hard.
\end{thm}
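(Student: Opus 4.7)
The plan is a reduction mapping an instance $G$ of $SSE'(1-\epsilon,1-f(\epsilon))$ to $G^t$, the graph on the same vertex set with transition kernel $P^t$, where $P$ is the lazy random walk matrix of $G$. Since $P^t$ preserves the stationary distribution $\pi$, volumes are preserved, so the notion of ``$\delta$-small set'' is the same in $G$ and $G^t$. The two parameters to balance are $t\le \eta/\epsilon$ for completeness and $t=\Omega(1/f(\epsilon)^2)$ for soundness; these coexist precisely when $f(\epsilon)^2\ge\Omega(\epsilon/\eta)$, which by the hypothesis $f(\epsilon)/\sqrt\epsilon\to\infty$ holds for all sufficiently small $\epsilon$ once $\eta$ is fixed.

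Expanding $\mathbf{1}_S/\sqrt{\pi(S)}=\sum_i\alpha_i\psi_i$ in the orthonormal eigenbasis of $P$ in $\ell^2(\pi)$ (with $\lambda_i\in[0,1]$ since the walk is lazy), one has $1-\phi_{G^t}(S)=\sum_i\alpha_i^2\lambda_i^t$. For completeness, if $\phi_G(S)\le \epsilon$ then $\sum_i\alpha_i^2\lambda_i=1-\phi_G(S)\ge 1-\epsilon$; applying Jensen's inequality to the convex function $x\mapsto x^t$ on $[0,1]$ yields
\[
\sum_i\alpha_i^2\lambda_i^t\;\ge\;\Bigl(\sum_i\alpha_i^2\lambda_i\Bigr)^t\;\ge\;(1-\epsilon)^t\;\ge\;1-t\epsilon,
\]
so $\phi_{G^t}(S)\le t\epsilon\le\eta$ whenever $t\le\eta/\epsilon$, and the same $S$ witnesses the completeness case of $SSE(1-\eta,\half)$.

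For soundness, suppose toward contradiction that some $S$ with $\vol(S)\le\delta$ in $G^t$ satisfies $\phi_{G^t}(S)<\half$, i.e., $\sum_i\alpha_i^2\lambda_i^t>\half$. The stationary term contributes $\alpha_0^2\lambda_0^t=\pi(S)\le\delta$, so for $\delta\le 1/4$ the non-stationary modes give $\sum_{i\ge 1}\alpha_i^2\lambda_i^t>1/4$. Thresholding at $\tau:=(1/8)^{1/t}=1-\Theta(1/t)$ and discarding modes with $\lambda_i<\tau$ (which together contribute at most $\tau^t=1/8$), the projection $v:=\Pi_{\{\lambda_i\ge\tau,\,i\ge 1\}}\mathbf{1}_S$ is $\pi$-orthogonal to $\mathbf{1}$, satisfies $\|v\|_\pi^2\le\|\mathbf{1}_S\|_\pi^2=\pi(S)\le\delta$, and has Rayleigh quotient $\langle v,Pv\rangle_\pi/\|v\|_\pi^2\ge\tau=1-\Theta(1/t)$. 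Applying a small-set Cheeger-style rounding theorem (in the spirit of \cite{RaghavendraS10,RaghavendraST12}), this $\ell_\pi^2$-small, mean-zero vector of high Rayleigh quotient yields a subset $S'\subseteq V$ with $\vol(S')=O(\delta)$ and $\phi_G(S')\le O(\sqrt{1-\tau})=O(1/\sqrt t)$. Taking $t=\Theta(1/f(\epsilon)^2)$ forces $\phi_G(S')<f(\epsilon)$, contradicting the SSE soundness hypothesis on $G$.

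The main technical obstacle is exactly this soundness rounding step: converting the spectral inequality $\sum_i\alpha_i^2\lambda_i^t>\half$ into a genuine small non-expanding subset of $G$. It requires a small-set Cheeger-type inequality for vectors that are $\ell_\pi^2$-small (rather than literally supported on a set of volume $\delta$), which must either be invoked from the SSE literature or reproven en route; mild quantitative losses in that rounding are absorbed by tightening the threshold $\tau$ and shrinking $\delta$ by a constant factor between input and output. Every other step is standard spectral manipulation, and the final parameter check $f(\epsilon)=\omega(\sqrt\epsilon)$ exactly yields a feasible $t$.
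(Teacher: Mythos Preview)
Your overall architecture matches the paper exactly: reduce $G$ to the $t$-step lazy walk graph $G^t$, use $\phi_{G^t}(S)\le t\cdot\phi_G(S)$ (or equivalently your Jensen argument) for completeness, and for soundness argue that a non-expanding small set in $G^t$ forces a non-expanding small set in $G$. The completeness half is fine and essentially identical to the paper's.

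The soundness half has a real gap, and it is precisely at the point you flag as the ``main technical obstacle.'' Your proposed rounding input is the spectral projection $v=\Pi_{\{\lambda_i\ge\tau,\,i\ge 1\}}\mathbf 1_S$, and the only smallness you retain is $\|v\|_\pi^2\le\pi(S)\le\delta$. But $\ell^2_\pi$-norm is scale, not sparsity: $c\,\psi_1$ has arbitrarily small $\ell^2_\pi$-norm for any eigenvector $\psi_1$, yet Cheeger rounding of $\psi_1$ returns a \emph{balanced} cut. Concretely, on the $N$-cycle the vector $\sqrt{\delta}\,\psi_1$ is mean-zero, has $\|\cdot\|_\pi^2=\delta$, and Rayleigh quotient $1-\Theta(1/N^2)$, yet every set of volume $\le\delta$ has expansion $\ge\Theta(1/(\delta N))\gg 1/N$. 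So no theorem of the form ``$\ell^2_\pi$-small, mean-zero, high Rayleigh quotient $\Rightarrow$ small non-expanding set'' can exist, and nothing of that shape appears in \cite{RaghavendraS10,RaghavendraST12}. Spectral projection destroys exactly the structure (nonnegativity, controlled $\ell^1$-mass) that any small-set Cheeger rounding needs.

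The paper's soundness argument avoids this by never projecting. Instead it tracks the random-walk iterates $v_i=M^i\mathbf 1_S$, which are \emph{nonnegative} and satisfy $\|Dv_i\|_1=\vol(S)$ for all $i$ (the walk preserves $\ell^1$-mass). From $\phi_{G^t}(S)<1-\eta$ one gets $\|D^{1/2}v_i\|_2^2\ge\eta\,\vol(S)$ for all $i\le t/2$, and then a telescoping argument finds some $i$ with $\|D^{1/2}v_{i+1}\|_2^2/\|D^{1/2}v_i\|_2^2\ge 1-\Theta(1/t)$, i.e.\ $v_i$ has lazy-walk Rayleigh quotient $\ge 1-\Theta(1/t)$. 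Now the $\ell^1$-bound is the crucial ingredient you are missing: truncating $v_i$ at level $\theta=\eta/4$ gives a vector supported on a set of volume at most $\|Dv_i\|_1/\theta=4\vol(S)/\eta\le 8\delta$, and a short computation (the paper's Lemma on truncated Rayleigh quotients) shows the truncation at most doubles the Rayleigh quotient because $4\theta\|Dv_i\|_1\le\|D^{1/2}v_i\|_2^2$. Ordinary Cheeger on this truncated vector then yields a set of volume $\le 8\delta$ and expansion $O(1/\sqrt t)$ in $G$, which is why the hypothesis is stated for $SSE'$ (soundness up to volume $8\delta$) rather than $SSE$. Replacing your spectral projection step with this walk-iterate-plus-truncation step closes the gap; everything else in your plan goes through.
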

We remark here that the result has some discrepancy in the set
sizes between the original instance and the instance produced by the
reduction.  For this reason, the reduction has to start with a slightly
different version of the Small set expansion problem $SSE'$ (See
\prettyref{def:sseprime}).

The above result nicely complements the gap
amplification result for the closely related problem of Unique Games
obtained via parallel repetition \cite{Rao08}.  For the sake of
completeness we state the result below.
\begin{theorem} \cite{Rao08}
Let $f$ be any function such that $\lim_{\epsilon \to 0} \frac{f(\epsilon)}{\sqrt{\epsilon}} \to \infty$. Then

If for all $\epsilon > 0$ if $UG(1-\epsilon, 1-f(\epsilon))$ is NP-hard
then
for all $\eta > 0$, $UG(1-\eta, \half)$ is NP-hard.
\end{theorem}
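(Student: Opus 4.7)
The plan is to apply $R$-fold parallel repetition to the hard $\uniquegames$ instances and invoke Rao's parallel repetition theorem for projection games, whose quadratic-in-gap exponent is exactly what the $\sqrt{\epsilon}$-type hypothesis allows. Unique games form a special case of projection games (each edge constraint is a bijection between alphabets, hence a projection in both directions), and the $R$-fold parallel repetition $\Phi^{\otimes R}$ of a unique game $\Phi$ on alphabet $\Sigma$ is again a unique game on alphabet $\Sigma^R$: each constraint in the repeated instance is the coordinatewise product of $R$ bijections, which is itself a bijection. So the reduction outputs instances of the same problem.

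For completeness, if $\Phi$ admits an assignment satisfying a $1-\epsilon$ fraction of constraints, I would use the product assignment that applies it in every coordinate; this satisfies a $(1-\epsilon)^R \ge 1 - R\epsilon$ fraction of repeated constraints. For soundness, Rao's theorem asserts that if $\val(\Phi) \le 1 - f(\epsilon)$ then $\val(\Phi^{\otimes R}) \le (1 - c\cdot f(\epsilon)^2)^R$ for an absolute constant $c>0$. I then take $R = C/f(\epsilon)^2$ with $C$ large enough that the soundness drops below $\half$, while the completeness remains at least $1 - C\epsilon/f(\epsilon)^2$. Since $f(\epsilon)/\sqrt{\epsilon} \to \infty$ as $\epsilon \to 0$, for any fixed $\eta>0$ I can pick $\epsilon = \epsilon(\eta)$ small enough to force $C\epsilon/f(\epsilon)^2 \le \eta$; invoking the hypothesis at this $\epsilon$ and composing with the reduction yields $\np$-hardness of $UG(1-\eta, \half)$.

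The main obstacle in the overall argument is packaged inside Rao's theorem, which I would use as a black box: the crucial feature is that the soundness exponent degrades like $f(\epsilon)^2$, not a higher power of $f(\epsilon)$. A cruder bound (for example Raz's original theorem applied in the general setting with $f(\epsilon)^{32}$ in the exponent, or Holenstein's with $f(\epsilon)^3$) would demand many more rounds to crush the soundness to $\half$ and would force completeness below $1-\eta$ unless a much stronger hypothesis than $f(\epsilon)/\sqrt{\epsilon}\to\infty$ were available. The remaining ingredients---closure of unique games under parallel repetition, the product-assignment completeness bound, and the final choice of $R$ and $\epsilon$---are routine.
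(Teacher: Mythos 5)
Your proposal is correct and is exactly the intended derivation: the paper does not prove this theorem but cites it as a black box from Rao \cite{Rao08}, and your argument---closure of unique games under parallel repetition, the $(1-\epsilon)^R \ge 1-R\epsilon$ completeness bound, Rao's quadratic-exponent soundness decay $(1-c\,f(\epsilon)^2)^{\Omega(R)}$, and the choice $R = \Theta(1/f(\epsilon)^2)$ with $\epsilon$ small enough that $C\epsilon/f(\epsilon)^2 \le \eta$---is the standard route by which that citation yields the stated gap amplification. You also correctly identify the one non-routine ingredient, namely that the exponent must degrade like $f(\epsilon)^2$ rather than a higher power for the hypothesis $f(\epsilon)/\sqrt{\epsilon}\to\infty$ to suffice.
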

Note that the size of the instance produced by our reduction
remains bounded by $O(n^2)$.  In fact, the instance produced has the
same number of vertices but possibly many more edges.  This is in
contrast to parallel repetition wherein the size of the instance grows
exponentially in the number of repetitions used.

Technically, the proof of the result is very similar to an argument
in the work of Arora, Barak and Steurer
\cite{AroraBS10} to show that graphs with sufficiently high threshold rank
cannot be small-set expanders  (see Steurer's thesis
\cite{SteurerThesis10} for an improved
version of the result).  The work of O'Donnell and Wright \cite{ODonnellW} recast these
arguments using continous-time random walks instead of lazy-random
walks, yielding cleaner and more general proofs.
In this work, we will reuse the proof technique and obtain upper and
lower bounds for the expansion profile of lazy random walks
(see \prettyref{thm:expansionprofile}).  These
upper and lower bounds immediately imply the desired gap amplification
result for small-set expansion.

Subsequent to our work, Kwok and Lau \cite{KwokL14} have obtained a
stronger analysis of our gap amplification theorem, yielding almost
tight bounds.

\section{Preliminaries}

Unless otherwise specified, we will be concerned with an undirected
 graph $G = (V,E)$ with $n$ vertices and associated edge weights $w: E \to \R^+$.  The
degree of vertex $i$ denoted by $d(i) = \sum_{ (i,j) \in E} w(i,j)$.
The volume of a set $S\subseteq V$ is defined to be $\vol(S) = \sum_{i \in S} d(i)$.
Henceforth, we will assume that the total volume is $1$, i.e.,
$\sum_{i \in V} d(i) = 1$.
The adjacency matrix $A$ of the graph $G$ has entries $A_{ij} =
w(i,j)$.  The degree matrix $D$ is a $n \times n$ diagonal matrix with
$D_{ii} = d(i)$.

\paragraph{Expansion profile}  The expansion profile of a graph is
defined as follows.
\begin{definition}
  For a graph $G$, define the expansion profile $\phi_G : \R^+ \to
  [0,1]$ as
  $$ \phi_G(\delta) = \min_{S \subseteq V, \vol(S) \leq \delta}
  \phi(S)$$
  where $\phi(S) = \frac{E(S, \bar{S})}{\vol(S)}$.
\end{definition}

\paragraph{Lazy Random Walks}
The transition matrix for a lazy random walk on $G$
is given by
$$
M = \tfrac{1}{2}(I + D^{-1}A)
$$
The lazy random walk corresponds to staying at the same vertex with
probability $\frac{1}{2}$, and moving to a random neighbor with
probability $\frac{1}{2}$.
We will let $G^t$ denote the graph corresponding to the $t$-step lazy random
walk. The adjacency matrix of $G^t$ is given by $DM^t$.

We recall a few standard facts about  lazy random walks here.
\begin{fact}
\label{fact:squarednorm}
If $G$ is a graph with adjacency matrix $A$,  then $G$'s lazy random
walk operator $M = \frac{1}{2}(I + D^{-1}A)$ has the property that
$\norm{D^{\half}Mv}^2_2 = v^T D M^2 v$ for any vector $v$.
\end{fact}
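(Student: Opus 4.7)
The plan is to expand the squared Euclidean norm on the left-hand side and then exploit the reversibility of the lazy random walk with respect to the degree-weighted stationary distribution. Explicitly, I would first write
\[
\snormt{D^{\half} M v} = (D^{\half} M v)^T (D^{\half} M v) = v^T M^T D M v\mper
\]
Reducing the claim to the identity $v^T M^T D M v = v^T D M^2 v$, which is a bilinear identity in $v$ and would follow from the operator equation $M^T D = D M$.

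The key step is therefore to verify that $DM$ is symmetric, i.e., that $M$ is reversible with respect to the stationary measure encoded by $D$. Since $G$ is undirected, $A$ is symmetric, so
\[
DM = D \cdot \tfrac{1}{2}\bigl(I + D^{-1} A\bigr) = \tfrac{1}{2}(D + A)\mper
\]
This expression is manifestly symmetric, hence $DM = (DM)^T = M^T D$. Substituting this into the expansion of $\snormt{D^{\half} M v}$ immediately yields $v^T D M \cdot M v = v^T D M^2 v$, as desired.

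There is no real obstacle here: the statement is a direct consequence of detailed balance for the lazy random walk on an undirected graph, and the entire argument is one short linear-algebraic manipulation. The only thing to keep in mind is that the fact uses the total-volume normalization $\sum_i d(i) = 1$ implicitly only through the convention that $D^{1/2}$ reweights vectors by the stationary measure; the identity itself is true regardless of normalization because it is homogeneous in $D$.
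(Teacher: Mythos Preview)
Your argument is correct. Both you and the paper start from $\snormt{D^{\half}Mv} = v^T M^T D M v$, but the paper then substitutes the factorization $M = \tfrac{1}{2}D^{-\half}(I + D^{-\half}AD^{-\half})D^{\half}$ and simplifies the resulting product, whereas you instead observe directly that $DM = \tfrac{1}{2}(D+A)$ is symmetric, i.e., $M^T D = DM$, and conclude in one line. The two arguments are equivalent at heart (both exploit the symmetry of $A$), but your reversibility formulation is slightly more streamlined and makes the underlying reason---detailed balance of the lazy walk---explicit, while the paper's route via the symmetrized operator $D^{-\half}AD^{-\half}$ has the advantage of setting up the spectral picture used later in Fact~\ref{fact:2normIneq}.
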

\begin{proof}
	We use the fact that $M = \tfrac{1}{2} D^{-\half}(I + D^{-\half}AD^{-\half}))D^{\half}$:
\begin{align*}
\norm{D^{\half}Mv}^2_2
& = \tfrac{1}{4} v^T M^T D M v \\
& = \tfrac{1}{4}v^T D^{\half}(I + D^{-\half} A D^{-\half})D^{-\half} D D^{-\half}(I + D^{-\half} A D^{-\half}) D^{\half}v \\
& = \tfrac{1}{4}v^T D^{\half}(I + D^{-\half} A D^{-\half})^2D^{\half}v \\
& = v^T D M^2 v,
\end{align*}
as desired.
\end{proof}

\begin{fact}
\label{fact:2normIneq}
If $G$ is a graph with adjacency matrix $A$, then for the lazy random
walk operator $M = \frac{1}{2}(I + D^{-1}A)$, we have
\[
  \exNt{D^{\nicefrac{1}{2}}v}^2  = v^T D v \ge v^T DMv \ge v^TDM^2 v = \exNt{D^{\nicefrac{1}{2}}Mv}^2.
\]
\end{fact}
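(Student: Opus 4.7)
The plan is to reduce everything to properties of the symmetric normalized adjacency matrix $\mathcal{A} = D^{-1/2}AD^{-1/2}$, whose eigenvalues lie in $[-1,1]$ since $G$ is an undirected graph with nonnegative weights. Rewriting the lazy walk operator gives $M = D^{-1/2}L D^{1/2}$ where $L = \tfrac{1}{2}(I + \mathcal{A})$ is symmetric with eigenvalues in $[0,1]$. This conjugation is the same identity already used in the proof of \prettyref{fact:squarednorm}, so I can quote it.

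Next I would introduce the change of variables $u = D^{1/2}v$ and translate each of the three quadratic forms in the statement into the $u$-basis. The leftmost term is just $\|u\|_2^2 = u^Tu$. For the middle term,
\[
v^T D M v = v^T D^{1/2}\bigl(\tfrac{1}{2}(I+\mathcal{A})\bigr) D^{1/2} v = u^T L u,
\]
and for the right term, combining with \prettyref{fact:squarednorm}, $v^T D M^2 v = \|D^{1/2}Mv\|_2^2 = u^T L^2 u$. So the chain of inequalities becomes $u^T u \ge u^T L u \ge u^T L^2 u$.

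Both inequalities then follow purely from the spectral bound $0 \preceq L \preceq I$. For the first, $I - L \succeq 0$ gives $u^T u \ge u^T L u$. For the second, since $L$ and $I - L$ are commuting PSD matrices, their product $L(I-L) = L - L^2$ is PSD, hence $u^T L u \ge u^T L^2 u$. This step is routine and I do not expect any obstacle; the only thing to be careful about is that $L$ is indeed symmetric (so that the "PSD" statements make sense), which is why working with $\mathcal{A}$ rather than the nonsymmetric $D^{-1}A$ is essential.
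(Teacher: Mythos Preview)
Your proposal is correct and is essentially the same argument as the paper's: the paper introduces the symmetric operator $M' = \tfrac{1}{2}(I + D^{-1/2}AD^{-1/2})$ (your $L$), expands $D^{1/2}v$ in its eigenbasis, and reads off the inequalities from $\mu_i \in [0,1]$. Your PSD statements $I-L \succeq 0$ and $L(I-L)\succeq 0$ are just the operator-language version of the same spectral bound, so the two proofs differ only in notation.
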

\begin{proof}
  Since the eigenvalues $\lambda_i$ of $D^{-\half}A D^{-\half}$ are
  between $[-1,1]$, the eigenvalues of
  $M' = \frac{1}{2}(I + D^{-\half}AD^{-\half})$ are
  $\mu_i = \frac{1}{2}(1 + \lambda_i)$, and so $\mu_i \in [0,1]$. Let
  $D^{\half}v = \sum \alpha_i  u_i$ be the decomposition of $D^{\half} v$
  in terms of the eigenvectors of $M'$. Then we have
  \[
	D^{\half}M v
	= M' D^{\half} v = \sum \alpha_i \mu_i u_i,
  \]
  and so
  $ v^TDv = \sum \alpha_i^2$, $v^TDMv = \sum \alpha_i^2 \mu_i$, and $v^TDM^2 v = \sum \alpha_i^2 \mu_i^2$.
  Since $\mu_i \in [0,1]$, we have $v^TDv \ge v^T D M v \ge v^T D M^2 v$, as desired.
\end{proof}

\begin{fact}
\label{fact:1normpres}
For the lazy random walk operator $M = \frac{1}{2}(I + D^{-1}A)$ and
any vector $v \in \R^V$, $v \ge 0$ we have
\[
\exNo{Dv} = \exNo{DMv}.
\]
\end{fact}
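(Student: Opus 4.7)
The plan is to expand $DMv$ entrywise and use nonnegativity together with the symmetry of $A$. Writing $DM = \tfrac{1}{2}(D + A)$, I get $DMv = \tfrac{1}{2}(Dv + Av)$. Because $v \ge 0$ coordinatewise, $A$ has nonnegative entries (it is a weighted adjacency matrix of an undirected graph), and $D$ is nonnegative diagonal, both $Dv$ and $Av$ are nonnegative vectors. Hence their $\ell_1$ norms are just their coordinate sums, and
\[
  \exNo{DMv} \;=\; \tfrac{1}{2}\bigl(\exNo{Dv} + \exNo{Av}\bigr).
\]

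The one substantive step is then showing $\exNo{Av} = \exNo{Dv}$. Using $\mathbf{1}$ for the all-ones vector and the symmetry of $A$,
\[
  \exNo{Av} \;=\; \mathbf{1}^T A v \;=\; (A\mathbf{1})^T v \;=\; \sum_{j} d(j)\, v(j) \;=\; \exNo{Dv},
\]
where the third equality uses $(A\mathbf{1})_j = \sum_i A_{ij} = d(j)$. Plugging this back gives $\exNo{DMv} = \exNo{Dv}$, which is the claim.

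Conceptually, this is just the statement that the lazy random walk, viewed as a Markov chain, preserves total probability mass on the measure $D$: if one thinks of $Dv$ as a (nonnegative) distribution-like vector on vertices, then applying $M$ on the density side and $D$ on the outside corresponds to pushing forward the measure along one step of the walk, which preserves total mass. There is no real obstacle here — the only mild care required is to keep track of the fact that symmetry of $A$ (i.e.\ the undirectedness of $G$) is what makes $A$'s column sums equal to the degrees, which is what turns $\exNo{Av}$ into $\exNo{Dv}$. Nonnegativity of $v$ is used exactly to avoid absolute values when splitting $\exNo{Dv + Av}$.
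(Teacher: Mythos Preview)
Your proof is correct and is essentially identical to the paper's: both compute $DM=\tfrac12(D+A)$, use nonnegativity to replace the $\ell_1$ norm by an inner product with $\mathbf{1}$, and then invoke $\mathbf{1}^T A = \mathbf{1}^T D$ (equivalently, $A\mathbf{1}$ equals the degree vector by symmetry of $A$). You are merely a bit more explicit about where nonnegativity and symmetry enter.
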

\begin{proof}
Let $v \in \R^V$. We have
\begin{align*}
\norm{DMv}_1
&= 1^TD(\tfrac{I + D^{-1}A}{2})v
~=~ \tfrac{1}{2}((1^TD)v + (1^TA)v)
~=~ \norm{Dv}_1,
\end{align*}
where the last inequality follows because $1^TD = 1^TA$.
\end{proof}

\paragraph{Small-Set Expansion Problem}
The formal statement of the SSE' problem is as follows.
\begin{defin} \label{def:sseprime}
For constants $0 < s < c < 1$ and $\delta > 0$, the Small-Set
Expansion problem $SSE'_\delta(c,s)$
 is defined as follows: Given a graph $G = (V,E)$, distinguish
between the following two cases:
\begin{itemize}
	\item $G$ contains a set $S$ such that $\vol(S) \in [\delta/2,
			\delta]$ and
	$\phi(S) \leq 1 - c$
\item All sets $S$ with $ \vol(S) \leq \mathbf{8\delta}$ in $G$ have expansion
	$\phi(S) \geq 1 - s$.
\end{itemize}
\end{defin}
The key difference from $SSE_\delta(c,s)$ is that the soundness is
slightly stronger in that even sets of size $8\delta$ have expansion
at least $1-s$.

\paragraph{Organization}
In \prettyref{sec:expprofile}, we will obtain upper and lower bounds
(\prettyref{thm:expansionprofile}) for expansion profile of lazy
random walks.  Subsequently, we use these bounds to conclude the main
result of the paper in \prettyref{sec:reduction}.
In \prettyref{app:equiv}, we give a reduction that establishes the
equivalence of the search versions of two different notions of Small-Set
Expansion.  We also present a reduction from $SSE$ on irregular graphs to $SSE$ on regular graphs in
\prettyref{app:irregToReg}.  Finally, in \prettyref{app:UG}, we discuss some obstacles encountered in applying
this reduction to the Unique Games problem.

\section{Expansion profile of lazy random walks } \label{sec:expprofile}

Let $G = (V,E)$ be a graph with adjacency matrix $A$, and
diagonal degree matrix $D$. The transition matrix for a lazy random walk on $G$
is $M = \tfrac{1}{2}(I + D^{-1}A)= \tfrac{1}{2}D^{-\half}(I +
D^{-\half}AD^{-\half})D^{\half}$.

For every $t \in \N$, let $G^t$ denote the graph corresponding to the $t$-step lazy random
walk whose adjacency matrix is given by $DM^t$.  We will prove the
following theorem about the expansion profile of $G^t$.

\begin{theorem} \label{thm:expansionprofile}
  For all $t \in \N$ and $\eta, \delta \in (0,1]$,
	if $G^t$ denotes the graph corresponding to the $t$-step lazy
	random walk in a graph $G = (V,E)$ then,
	$$ \min\left( 1 - \left(1-
  \frac{\phi^2_G(\frac{4\delta}{\eta})}{32}\right)^{t},
    1-\eta \right)   \leq \phi_{G^t}(\delta) \leq \frac{t}{2} \cdot \phi_G(\delta) $$
\end{theorem}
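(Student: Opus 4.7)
The upper bound $\phi_{G^t}(\delta) \le (t/2)\phi_G(\delta)$ is a direct random-walk union bound. Fix a set $S$ with $\vol(S)\le\delta$ achieving $\phi_G(\delta)$, and view $\phi_{G^t}(S) = \vol(S)^{-1}\,\mathbf{1}_S^T DM^t\mathbf{1}_{\bar S}$ as the probability that a $t$-step lazy walk started from the $D$-weighted distribution on $S$ exits $S$. Since $M^{k-1}\mathbf{1}_S\le\mathbf{1}$ entrywise, the walker's distribution at step $k-1$ is pointwise dominated by $D\mathbf{1}/\vol(S)$ on $S$, so the chance of an $S$-to-$\bar S$ transition at step $k$ is at most $\tfrac12\phi_G(S)$; summing over $k=1,\dots,t$ yields the stated bound.

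For the lower bound, fix any $S$ with $\vol(S)\le\delta$ and set $f=\mathbf{1}_S$, $v_k = M^k f$, $g_k = D^{1/2}v_k$. Because $DM = \tfrac12(D+A)$ is symmetric, induction gives $(M^T)^k D = DM^k$, hence $\|g_k\|^2 = f^T DM^{2k}f$ and $\phi_{G^{2k}}(S) = 1-\|g_k\|^2/\vol(S)$ (odd $t$ reduces to the even case by monotonicity of $f^T DM^t f$ in $t$, a consequence of Fact~\ref{fact:2normIneq}). The heart of the proof is the following decay step, with $\phi^* := \phi_G(4\delta/\eta)$: \emph{if $\|g_k\|^2 \ge \eta\,\vol(S)$ then $\|g_{k+1}\|^2 \le (1-(\phi^*)^2/32)\|g_k\|^2$}. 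Combining this with the monotonicity $\|g_{k+1}\|^2\le\|g_k\|^2$ from Fact~\ref{fact:2normIneq} yields $\|g_k\|^2 \le \max(\eta,\,(1-(\phi^*)^2/32)^k)\cdot\vol(S)$, which matches the stated theorem up to the factor of two in the exponent coming from the relation $t=2k$.

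The decay step is proved by a Cheeger-type argument on a truncation. Fact~\ref{fact:1normpres} gives $\|Dv_k\|_1 = \vol(S)\le\delta$, so by Markov's inequality $\vol(\{v_k>\tau\}) \le \delta/\tau$ for every $\tau>0$; taking $\tau = \eta/4$ shows that every level set of the truncated function $u(i) := \max(v_k(i)-\eta/4,0)$ has volume at most $4\delta/\eta$ and therefore expansion at least $\phi^*$. The standard Cheeger inequality applied to $u$ then gives $u^T Lu \ge \tfrac{(\phi^*)^2}{2}\,u^T Du$, where $L = D-A$. Two observations close the argument: (a) truncation contracts pairwise differences, $|u(i)-u(j)|\le|v_k(i)-v_k(j)|$, so $v_k^T Lv_k \ge u^T Lu$; (b) when $\|g_k\|^2\ge\eta\,\vol(S)$, the linear mass bound $\sum_i d(i)v_k(i) = \vol(S)$ forces at least half of $\|g_k\|^2$ to be carried by $\{v_k > \eta/2\}$, on which $u(i)\ge v_k(i)/2$, yielding $u^T Du \ge \|g_k\|^2/8$. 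Combining, $v_k^T Lv_k \ge (\phi^*)^2\|g_k\|^2/16$. Finally, $I-M = \tfrac12 D^{-1}L$ gives $\|g_k\|^2 - v_k^T DMv_k = \tfrac12 v_k^T Lv_k$, and Fact~\ref{fact:2normIneq} gives $\|g_{k+1}\|^2 \le v_k^T DMv_k$, so $\|g_k\|^2-\|g_{k+1}\|^2 \ge (\phi^*)^2\|g_k\|^2/32$, completing the decay step.

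The main obstacle is the calibration between the truncation threshold and the volume parameter $4\delta/\eta$. The threshold must be high enough (at least $\eta/4$) that Markov's inequality keeps every level set of $u$ within the volume range where Cheeger guarantees expansion $\phi^*$, but low enough that $u$ inherits a constant fraction of the $\ell_2$ mass of $v_k$ — which is exactly the content of observation (b) and relies crucially on the assumption $\|g_k\|^2\ge\eta\,\vol(S)$. Balancing these forces the appearance of $\phi_G(4\delta/\eta)$, rather than $\phi_G(\delta)$, in the final bound.
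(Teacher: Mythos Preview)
Your argument is correct, and for the upper bound it is essentially the same as the paper's (you prove directly what the paper cites from \cite{GharanT12}). For the lower bound the core ingredients---tracking $\|D^{1/2}M^k\mathbf{1}_S\|_2^2$, truncating at threshold $\Theta(\eta)$ so that Markov's inequality forces all level sets into the volume range $\le 4\delta/\eta$, and then invoking Cheeger---are identical to the paper's, and your observations (a) and (b) are a clean replacement for the paper's \prettyref{lem:rayleighbound}.

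The one structural difference is worth noting. You prove a \emph{per-step decay} statement (``whenever $\|g_k\|^2\ge\eta\,\vol(S)$, one step contracts by $1-(\phi^*)^2/32$'') and then iterate it $t/2$ times, which is why you end up with exponent $t/2$ rather than $t$. The paper instead argues by contradiction: assuming $\|w_{t/2}\|^2 \ge (1-(\phi^*)^2/32)^t\,\vol(S)$, the telescoping product over the $t/2$ steps is at least $(1-(\phi^*)^2/32)^t$, so by pigeonhole \emph{some} single step has ratio at least $(1-(\phi^*)^2/32)^2 > 1-(\phi^*)^2/16$. This extra factor of~$2$ in the one good step is exactly what the Cheeger and truncation constants absorb, and is how the paper recovers exponent $t$. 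Your direct approach is arguably cleaner and more constructive; the paper's contradiction-plus-pigeonhole buys the sharper exponent. Since the downstream application (\prettyref{thm:amplification}) only needs $t$ up to constants, your version suffices there, as you note.
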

We will split the proof of the above theorem in to two parts:
\prettyref{lem:upperbound} and \prettyref{lem:lowerbound}

\begin{lemma} \label{lem:upperbound}
	For every subset $S \sse V$,
  $$ \phi_{G^t}(S) \leq \frac{t}{2} \cdot
  \phi_G(S),$$
and therefore
  $\phi_{G^t}(\delta) \leq \frac{t}{2} \cdot
  \phi_G(\delta)$.
\end{lemma}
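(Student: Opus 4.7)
}

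The plan is to show the edge-count bound $E_{G^t}(S,\bar S)\le\tfrac{t}{2}\,E_G(S,\bar S)$ and then divide by the volume, which is preserved by the walk: $\vol_{G^t}(S)=\vol_G(S)$. The two structural facts I will rely on are (i) $DM^s$ is symmetric for every $s$, and (ii) every row and column of $DM^s$ sums to $d(i)$. Symmetry follows because $DM=\tfrac{1}{2}(D+A)$ is symmetric, which gives $M^{\top}D=DM$ and then propagates by an easy induction; the row/column marginals follow from $M^s\mathbf{1}=\mathbf{1}$ combined with symmetry. In particular, the row vector $\mathbf{1}_S^{\top}DM^s$ is non-negative and satisfies $(\mathbf{1}_S^{\top}DM^s)_j\le d(j)$ for every $j\in V$, which will be the main inequality driving the bound.

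Next I telescope:
\[
DM^t - D \;=\; \sum_{s=0}^{t-1}DM^s(M-I) \;=\; \tfrac{1}{2}\sum_{s=0}^{t-1}DM^s(D^{-1}A-I)\mper
\]
Contracting with $\mathbf{1}_S^{\top}$ on the left and $\mathbf{1}_{\bar S}$ on the right, the left-hand side collapses to $E_{G^t}(S,\bar S)$, using $\mathbf{1}_S^{\top}D\mathbf{1}_{\bar S}=0$. It therefore suffices to show that each summand $\mathbf{1}_S^{\top}DM^s(D^{-1}A-I)\mathbf{1}_{\bar S}$ is bounded above by $E_G(S,\bar S)$. A direct pointwise computation shows that $(D^{-1}A-I)\mathbf{1}_{\bar S}$ equals $E(j,\bar S)/d(j)\ge 0$ on coordinates $j\in S$ and $-E(j,S)/d(j)\le 0$ on $j\in\bar S$. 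The $j\in\bar S$ part contributes non-positively (since $\mathbf{1}_S^{\top}DM^s$ is entrywise non-negative) and can be dropped; for $j\in S$ the column bound $(\mathbf{1}_S^{\top}DM^s)_j\le d(j)$ gives
\[
\mathbf{1}_S^{\top}DM^s(D^{-1}A-I)\mathbf{1}_{\bar S}
\;\le\; \sum_{j\in S}d(j)\cdot\frac{E(j,\bar S)}{d(j)}
\;=\; E_G(S,\bar S)\mper
\]
Summing over $s=0,\dots,t-1$, dividing by $2$ and then by $\vol(S)$ yields the claimed inequality.

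The main obstacle, to the extent there is one, is knowing which estimate to use: the naive probabilistic union bound --- start a lazy walk from $\pi_S$, argue that each of the $t$ steps crosses the boundary with probability at most $\tfrac{1}{2}\phi_G(S)$ --- is awkward to execute directly because after a partial walk the distribution conditioned on remaining in $S$ need not equal $\pi_S$ and could be concentrated on high-boundary vertices. The matrix-level argument above sidesteps this by using symmetry of $DM^s$ to bound its column marginals by $d(j)$, which is exactly the aggregate form of ``the conditional distribution on $S$ is dominated by $\pi_S$'' needed to preserve the per-step estimate.
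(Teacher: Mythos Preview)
Your proof is correct and self-contained; it proceeds along a genuinely different route from the paper's.

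The paper's argument quotes a lower bound of Gharan--Trevisan on the probability $p(t)$ that a lazy walk started from the stationary distribution on $S$ stays inside $S$ for all $t$ steps, namely $p(t)\ge (1-\tfrac12\phi_G(S))^t$. Since $\phi_{G^t}(S)$ is the probability of being outside $S$ at time $t$, which is at most the probability of ever leaving $S$, one gets $\phi_{G^t}(S)\le 1-(1-\tfrac12\phi_G(S))^t\le \tfrac{t}{2}\phi_G(S)$. Your approach instead proves the edge-count inequality $E_{G^t}(S,\bar S)\le \tfrac{t}{2}E_G(S,\bar S)$ directly, via the telescope $DM^t-D=\tfrac12\sum_{s<t}DM^s(D^{-1}A-I)$ and the column bound $(\mathbf 1_S^{\top}DM^s)_j\le d(j)$ coming from symmetry of $DM^s$. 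Each summand is then bounded pointwise by $E_G(S,\bar S)$, and volume is preserved.

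What each buys: the paper's route yields the sharper intermediate exponential bound $1-(1-\tfrac12\phi_G(S))^t$ before linearizing, at the cost of importing the Gharan--Trevisan escape-probability lemma (which, as you note, is not a one-line union bound because the conditional-on-staying distribution need not be $\pi_S$). Your matrix argument is entirely elementary and avoids that lemma, but naturally produces only the linear bound $\tfrac{t}{2}\phi_G(S)$; for the purposes of this lemma that is exactly what is needed. Your closing paragraph correctly identifies the key insight: the column-marginal bound $(\mathbf 1_S^{\top}DM^s)_j\le d(j)$ is precisely the aggregate surrogate for ``the mass on $S$ is dominated by $\pi_S$'' that makes the per-step estimate go through without tracking the conditional distribution.
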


\begin{proof}
  Fix a subset $S \subset V$.  From \cite{GharanT12}, we have that the probability $p(t)$ that a
lazy random walk stays entirely in $S$ for $t$ steps is bounded below by \[
p(t) \ge \left(1 - \frac{1}{2}\phi(S)\right)^t.  \] Now, the expansion of $S$
in $G^t$ is the probability of leaving the set on the $t$th step of the random
walk, which is at most $1 - p(t)$. Hence,
\[
  \phi_{G^t}(S)
\le 1 - p(t)
\le 1 - \left(1 - \frac{1}{2}\phi(S)\right)^t
\le \frac{t}{2}\phi(S),
\]
as desired. The result immediately follows for all sets of volume $\le \delta$.
\end{proof}

\begin{lemma}  \label{lem:lowerbound}

  For all $t, \eta$,
  $$ \phi_{G^t}\left(\delta\right) \geq \min\left( 1 - \left(1-
  \frac{\phi^2_G(\frac{4\delta}{\eta})}{32}\right)^{t},
    1-\eta \right) $$

 \end{lemma}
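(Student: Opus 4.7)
Plan: Fix any $S \sse V$ with $\vol(S) \leq \delta$, set $v_k := M^k \mathbf{1}_S$, and define
$\Psi_t := \mathbf{1}_S^T D M^t \mathbf{1}_S$. Since $M$ is right-stochastic, $M^t\mathbf{1} = \mathbf{1}$ and hence $\vol_{G^t}(S) = \vol(S)$, so $\phi_{G^t}(S) = 1 - \Psi_t/\vol(S)$. It therefore suffices to establish the upper bound
\[
\Psi_t/\vol(S) \;\leq\; \max\!\bigl(\,\eta,\; (1 - \tfrac{1}{32}\phi_G^2(4\delta/\eta))^t\,\bigr)
\]
for every $t$. Using the identity $DM = M^T D$ one checks that $\Psi_{2k} = \|D^{1/2} v_k\|_2^2$, $\Psi_{2k+1} = v_k^T D M v_k$, and $\Psi_{2k+2} = v_k^T D M^2 v_k$, so Fact~\ref{fact:2normIneq} immediately yields $\Psi_{t+1} \leq \Psi_t$ for all $t$.

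The heart of the argument is a small-set-expansion Cheeger-type one-step decay. Suppose $\Psi_{2k} \geq \eta \vol(S)$ and let $f := v_k$. From $\mathbf{1}^T D M = \mathbf{1}^T D$ we get $\sum_u d_u f(u) = \vol(S)$, so by Markov the super-level sets $L^\tau := \{u : f(u) > \tau\}$ satisfy $\vol(L^\tau) \leq \vol(S)/\tau$. Choose $\tau_0 := \eta/4$ and define the truncation $\hat f := (f - \tau_0)_+$. Every nonempty level set of $\hat f$ then has volume at most $V_0 := \vol(S)/\tau_0 \leq 4\delta/\eta$, while a direct calculation (splitting the integrand by whether $f_u \lessgtr \tau_0$) gives
\[
\|D^{1/2}\hat f\|_2^2 \;\geq\; \Psi_{2k} - 2\tau_0\vol(S) \;\geq\; \Psi_{2k}/2.
\]

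Now apply the standard layer-cake / Cauchy-Schwarz derivation of Cheeger's inequality to $\hat f$: since every level set sits in the small-set regime, $\phi(L^\tau) \geq \phi_G(V_0)$, and one obtains $\mathcal{E}(\hat f, \hat f) \geq \tfrac{1}{2}\phi_G(V_0)^2\|D^{1/2}\hat f\|_2^2$. Because truncation is $1$-Lipschitz, $\mathcal{E}(f,f) \geq \mathcal{E}(\hat f, \hat f)$; combining this with $\mathcal{E}(f,f) = 2 f^T D (I - M) f = 2(\Psi_{2k} - \Psi_{2k+1})$ yields
\[
\Psi_{2k+1} \;\leq\; \bigl(1 - \tfrac{1}{8}\phi_G^2(4\delta/\eta)\bigr)\,\Psi_{2k}.
\]

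To conclude: if $\Psi_{2k} < \eta\vol(S)$ for some $k \leq t/2$, monotonicity gives $\Psi_t \leq \eta\vol(S)$; otherwise the decay above applies at every stage, so that combined with $\Psi_{2k+2} \leq \Psi_{2k+1}$ we get $\Psi_t \leq (1 - \phi^2/8)^{\lceil t/2 \rceil}\vol(S)$. Using $\sqrt{1-x}\leq 1 - x/2$ and $\lceil t/2\rceil \geq t/2$, this is at most $(1 - \phi^2/16)^t\vol(S) \leq (1 - \phi^2/32)^t\vol(S)$, as needed. The main obstacle is the Cheeger step: one has to choose the truncation threshold $\tau_0$ small enough that $\|D^{1/2}\hat f\|_2^2$ retains a constant fraction of $\Psi_{2k}$, yet large enough that $V_0 = \vol(S)/\tau_0$ remains in the small-set regime $4\delta/\eta$ where $\phi_G(V_0)$ is nontrivial; the choice $\tau_0 = \eta/4$ is precisely what balances these two constraints and produces the constants in the theorem.
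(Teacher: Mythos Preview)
Your proof is correct and uses the same core ingredients as the paper: the vectors $v_k=M^k\mathbf 1_S$, monotonicity of $\Psi_t=\mathbf 1_S^TDM^t\mathbf 1_S$, truncation at threshold $\eta/4$, the Markov bound $\vol(\{v_k>\tau_0\})\le 4\vol(S)/\eta$, and the Cheeger sweep on the truncated vector to compare against $\phi_G(4\delta/\eta)$.

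The organization, however, is different. The paper argues by contradiction: it assumes $\phi_{G^t}(S)$ is small, which forces $\|D^{1/2}v_{t/2}\|_2^2$ to be large, and then uses a telescoping/pigeonhole step to locate a \emph{single} index $i<t/2$ at which $\|w_{i+1}\|_2^2/\|w_i\|_2^2>1-\hat\beta^2/4$; it truncates $v_i$ once, applies Cheeger once, and extracts a small non-expanding set contradicting the definition of $\phi_G(4\delta/\eta)$. You instead prove the per-step decay $\Psi_{2k+1}\le(1-\phi^2/8)\Psi_{2k}$ directly at \emph{every} stage where $\Psi_{2k}\ge\eta\vol(S)$, and iterate. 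This direct formulation is arguably cleaner (no pigeonhole, no contradiction) and incidentally produces $(1-\phi^2/16)^t$ before you relax it to the stated $(1-\phi^2/32)^t$; the paper's extra factor of two comes from passing between expansion in $G$ and in the lazy-walk graph $DM$, a detour your Dirichlet-form bookkeeping avoids.
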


We prove this lemma by contradiction, by showing that if the expansion in the
final graph is not large enough then there exists a vector with bounded
Rayleigh quotient with respect to the original graph, from which we can extract
a non-expanding set.  The intuition is that the expansion of a set in the final
graph $DM^t$ corresponds to the neighborhood of the random walk after $t$
steps, and if the neighborhood is not large enough after $t$ steps, there must
be at least one step (or application of $M$) during which it did not grow.

\begin{proof} Suppose by way of contradiction that this is not the
  case.
  Let $\beta = \phi_G(\frac{4\delta}{\eta})$ and
  let $\delta' = \frac{4\delta}{\eta}$.  Further, let $\hat \beta =
  \frac{1}{2} \beta$.

Let $S$ be a set of volume at most $\delta \cdot \vol(V)$ such that
\begin{equation} \label{eq:contra}
\phi_{G^t}(S) \leq \min\left( 1 - \left(1-
\frac{\hat{\beta}^2}{8}\right)^{t},
    1-\eta \right).
 \end{equation}
  Let $v_0 = 1_S$ be the vector corresponding to
the indicator function of the set $S$. Define $v_{i} =  M^iv_{0}$, and for
the diagonal degree matrix $D$ of $A$, define
$w_i = D^{\half}v_i$. Note that $\exNt{w_0}^2 = \vol(S)$,
and $\norm{Dv_0}_1 = \vol(S)$.  By Fact~\ref{fact:1normpres} we also have
$\norm{Dv_i}_1 = \vol(S)$ for all $i$.

We first lower-bound $\exNt{w_{\frac{t}{2}}}$.  By definition of expansion,
\begin{align}
\phi_{G^t}(S)
&= 1 - \frac{v_0^T D M^t v_0}{v_0^T D v_0} \nonumber \\
\intertext{which by Fact~\ref{fact:squarednorm} implies that $\exNt{D^{\half}M^\frac{t}{2} v_0}^2 = \vol(S)
(1-\phi_{G^t}(S))$.  Now, using \eqref{eq:contra} we get}
\exNt{w_{\frac{t}{2}}}^2
~=~ \exNt{D^{\half} M^\frac{t}{2} v_0}^2
&=  \vol(S) (1- \phi_{G^t}(S))
~\geq~ \vol(S) \cdot \max\left(\eta, (1-\frac{1}{8}\hat\beta^2)^t\right)
	\label{eq:lbd1}
\end{align}
By Fact~\ref{fact:2normIneq}, we have $\exNt{w_i} \ge \exNt{w_{i+1}} \ge 0$
for all $i$, and (\ref{eq:lbd1}) holds for all $i \le \frac{t}{2}$.

We now assert that there must be some $i$ for which
\[
\qquad \frac{\exNt{w_{i+1}}^2}{\exNt{w_i}^2} >
1-\tfrac{1}{4}\hat\beta^2.
\]
To see this, consider the product of all such terms for $i < \frac{t}{2}$. Some algebraic simplification shows that
\[
	\prod_{i = 0}^{\frac{t}{2}-1}
	\frac{\norm{w_{i+1}}^2_2}{\norm{w_i}^2_2}
~=~ \frac{\exNt{w_{\frac{t}{2}}}^2}{\exNt{w_0}^2}
~>~ \frac{(1-\tfrac{1}{8}\hat\beta^2)^t \cdot \vol(S)}{\vol(S)}
~=~ \left(1-\tfrac{1}{8}\hat\beta^2\right)^t,
\]
where the second-to-last inequality follows from (\ref{eq:lbd1}).
Thus for some $i < \frac{t}{2}$ we have
\[
	\frac{\norm{w_{i+1}}^2_2}{ \norm{w_i}^2_2}
	~>~ \left((1-\frac{1}{8}\hat\beta^2)^t\right)^{\frac{2}{t}}
	~ > ~ 1-\frac{1}{4}\hat\beta^2.
\]

Then let $w_i$ be the vector corresponding to the first $i$ for which
$\exNt{w_{i+1}}^2 \ge (1-\frac{1}{4}\hat\beta^2)\exNt{w_i}^2$.

Since $w_{i+1}$ is obtained from $v_i$ via one step of a lazy random walk and a
normalization, we can bound the Rayleigh quotient of $v_i$ with respect to the
Laplacian of $DM = \tfrac{1}{2}(D + A)$:
\begin{align}
\frac{v_i^T  D(I - M) v_i}{v_i^T D v_i}
& = 1 - \frac{v_i^T DM v_i}{v_i^TDv_i}, \nonumber \\
\intertext{by Fact~\ref{fact:2normIneq},}
& \leq  1 - \frac{ v_i^T  D  M^2 v_i}{v_i^TDv_i} \qquad  \nonumber\\
\intertext{and by Fact~\ref{fact:squarednorm},}
& =  1 - \frac{ \norm{w_{i+1}}_2^2}{\norm{w_i}_2^2} \nonumber \\
& \le \frac{1}{4}\hat\beta^2.  \label{eq:ray}
\end{align}

We now truncate the vector $v_i$, then run Cheeger's algorithm on the truncated vector in order to find a non-expanding small set, and thus obtain a contradiction.
 Let $\theta = \frac{\eta}{4}$.
We take the truncated vector
\[
z_i(j) = \left\{\begin{array}{ll} v_i(j) - \theta &\ v_i(j) \ge \theta\\
0 &\ \text{otherwise}\end{array}\right..
\]
By Fact~\ref{fact:1normpres}, $Dv_i$ has $L_1$ mass $\vol(S)$.
Thus, the total volume of the set $S_z$ of vertices with nonzero support in $z_i$ is
\[
\vol(S_z)
~=~
\sum_{v_i(j) > \theta} d(j)
~\le~
\sum_{v_i(j) > \theta} \frac{1}{\theta} d(j)v_i(j)
~\le~ \frac{1}{\theta} \cdot \norm{Dv_i}_1 = \frac{4 \vol(S)}{\eta}
\]
Hence any subset of $S_z$ has volume at most $ \frac{4 \vol(S) }{\eta}$.

For the vector $v_i$, we know that $\norm{Dv_i}_1
= \vol(S)$.  Moreover using \eqref{eq:lbd1},
\begin{align*}
\norm{ D^{1/2} v_i}_2^2
= \norm{w_i}_2^2 \geq \norm{w_{t/2}}_2^2 \geq \eta \vol(S) \mper
\end{align*}
Applying Lemma~\ref{lem:rayleighbound} to $v_i$ and $z_i$ to conclude,
$$ \frac{z_i^TD(I - M)z_i}{z_i^TDz_i} \leq 2 \frac{v_i^TD(I -
M)v_i}{v_i^TDv_i} \mper$$
Using (\ref{eq:ray}), this implies the following bound on the Rayleigh
quotient of $z_i$,
\[
  \frac{z_i^TD(I - M)z_i}{z_i^TDz_i} \ \leq\  \frac{1}{2}\hat\beta^2
  \mper
\]

Thus, when we run Cheeger's algorithm on $z_i$, we get a set of volume at most
$\frac{4\vol(S)}{\eta}$ and of expansion less than $\hat\beta$ in $DM$, and therefore less
than $\beta$ in $G$.  Since $\beta =
\phi_G(\frac{4\delta}{\eta})$, this is a contradiction.  This completes the proof of
Lemma~\ref{lem:lowerbound}.
\end{proof}

The following lemma, which gives an upper bound on the Rayleigh quotient of a
truncated vector, is a slight generalization of Lemma 3.4 of \cite{AroraBS10}.

\begin{lem}
\label{lem:rayleighbound}
Let $x \in \R^V$ be non-negative, let $L$ be the weighted Laplacian of a graph
$G = (V,E)$ with weights $w(i,j)$ and degree matrix $D$.  Suppose that
\begin{equation}
\label{eq:sparsitycond}
4\theta \norm{D x}_1 \le \norm{D^{1/2} x}_2^2
\end{equation}
Then for the threshold vector $y$ defined by
\[
y(i) = \left\{ \begin{array}{ll} x(i) -\theta & \ x(i) > \theta \\ 0 & \ \text{otherwise}\end{array}\right.,
\]
we have
\[
\frac{y^TLy}{y^TDy} ~\le~ 2 \cdot \frac{ x^TLx}{x^TDx}.
\]
\end{lem}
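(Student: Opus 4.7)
My plan is to bound the numerator and denominator of the Rayleigh quotient separately: I will show that thresholding can only decrease the Laplacian form, i.e.\ $y^T L y \le x^T L x$, while the sparsity hypothesis \eqref{eq:sparsitycond} guarantees that the denominator loses at most a factor of $2$, i.e.\ $y^T D y \ge \tfrac{1}{2} x^T D x$. Combining these two inequalities immediately yields $y^T L y / y^T D y \le 2\, x^T L x / x^T D x$, which is the desired conclusion.

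For the denominator, I would write
\[
x^T D x - y^T D y = \sum_{i\in V} d(i)\bigl(x(i)^2 - y(i)^2\bigr),
\]
and estimate each term. When $x(i) \le \theta$, the summand equals $x(i)^2 \le \theta\, x(i)$; when $x(i) > \theta$, it equals $2\theta x(i) - \theta^2 \le 2\theta x(i)$. Summing, $x^T D x - y^T D y \le 2\theta \sum_i d(i) x(i) = 2\theta \norm{Dx}_1$. The hypothesis \eqref{eq:sparsitycond} gives $2\theta \norm{Dx}_1 \le \tfrac{1}{2} \norm{D^{1/2}x}_2^2 = \tfrac{1}{2} x^T D x$, so $y^T D y \ge \tfrac{1}{2} x^T D x$.

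For the numerator, I use the edge-expansion form $y^T L y = \sum_{(i,j) \in E} w(i,j)(y(i) - y(j))^2$ and show $(y(i) - y(j))^2 \le (x(i) - x(j))^2$ edge by edge. WLOG $x(i) \ge x(j) \ge 0$. If both coordinates exceed $\theta$, the two differences are equal. If both are at most $\theta$, then $y(i) = y(j) = 0$ and the inequality is trivial. In the mixed case, $y(i) - y(j) = x(i) - \theta$ while $x(i) - x(j) \ge x(i) - \theta \ge 0$, so squaring preserves the inequality. Summing over edges gives $y^T L y \le x^T L x$.

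The proof is genuinely routine; the only place that uses anything nontrivial is the denominator bound, which is precisely where the sparsity condition \eqref{eq:sparsitycond} is needed. I do not anticipate any real obstacle, so the main care is simply in the case analysis for the numerator and in tracking the constant $4$ in \eqref{eq:sparsitycond} through the denominator estimate to land exactly on the factor of $2$ in the statement.
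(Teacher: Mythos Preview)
Your proposal is correct and follows essentially the same approach as the paper: bound the numerator by an edge-by-edge comparison $(y(i)-y(j))^2 \le (x(i)-x(j))^2$, and bound the denominator via the pointwise estimate $x(i)^2 - y(i)^2 \le 2\theta x(i)$ together with the hypothesis \eqref{eq:sparsitycond}. The only cosmetic difference is that the paper phrases the denominator step as $y(i)^2 \ge x(i)^2 - 2\theta x(i)$ rather than as an upper bound on the difference, but the computation is identical.
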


\begin{proof}
First, we show $y^TLy \le x^TLx$.
\begin{eqnarray*}
y^TLy &=& \sum_{(i,j) \in E} w(i,j)(y(i) - y(j))^2\\
&=& \sum_{\substack{(i,j) \in E\\ y(i), y(j) \ge 0}} w(i,j)(x(i) - x(j))^2 + \sum_{\substack{(i,j) \in E\\ y(i) \ge 0,  y(j) = 0}} w(i,j)(x(i) - \theta)^2\\
&\le& \sum_{(i,j) \in E} w(i,j)(x(i) - x(j))^2\\
&=& x^TLx,
\end{eqnarray*}
where the second-to-last inequality follows from the fact that if $y(i) = 0$, then $x(i) \le \theta$.

Now, we show that $y^TDy \ge \frac{1}{2}x^TDx$. First, we note that $d(i)y(i)^2 \ge d(i)x(i)^2 - 2\theta d(i)x(i)$ for all $k$. Thus,
\begin{eqnarray*}
\sum_{i \in V}d(i)y(i)^2 &\ge&  \sum_{i \in V}d(i)x(i)^2 - 2\theta d(i)x(i)\\
&=& \left( \sum_{i \in V}d(i)x(i)^2\right) - 2\theta \left(\sum_{i \in V} d(i)x(i)\right) \\
&\ge& \frac{1}{2}\sum_{i \in V}d(i)x(i)^2
\end{eqnarray*}
Where the the last inequality follows by assumption \prettyref{eq:sparsitycond}.

Thus, we have
\[
\frac{y^TLy}{y^TDy}~\le~2\cdot \frac{x^TLx}{x^TDx} ,
\]
as desired.
\end{proof}

\section{Gap Amplification } \label{sec:reduction}
In this section, we will prove \prettyref{thm:amplification} which we restate here for convenience.
\begin{thm} (Restatement of \prettyref{thm:amplification})
Let $f$ be any function such that $\lim_{\epsilon \to 0} \frac{f(\epsilon)}{\sqrt{\epsilon}} \to \infty$. Then

If for all $\epsilon > 0$, $SSE'(1-\epsilon, 1-f(\epsilon))$ is NP-hard
then
for all $\eta > 0$ $SSE(1-\eta, \frac{1}{2})$ is NP-hard.
\end{thm}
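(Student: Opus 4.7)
The plan is to use \prettyref{thm:expansionprofile} as a black box: given an instance of $SSE'_\delta(1-\epsilon,1-f(\epsilon))$, I will output the graph $G^t$ corresponding to $t$-step lazy random walks on $G$, where $\epsilon$ and $t$ are chosen based on the target parameter $\eta$. Because the target soundness in $SSE$ is $\frac{1}{2}$, I will apply the lower bound of \prettyref{thm:expansionprofile} with its internal parameter set to $\frac{1}{2}$, which is exactly why the source problem needs the strengthened soundness over sets of volume up to $8\delta = 4\delta/(\frac{1}{2})$ — this matches the $SSE'$ soundness condition verbatim.

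First I would fix the target $\eta > 0$ and $\delta > 0$. I then need to choose $\epsilon > 0$ and $t \in \N$ simultaneously satisfying the two constraints
\[
\tfrac{t}{2}\epsilon \ \le\ \eta
\qquad \text{and} \qquad
\Bigl(1-\tfrac{f(\epsilon)^2}{32}\Bigr)^{t} \ \le\ \tfrac{1}{2}.
\]
The second constraint requires roughly $t \gtrsim 32 \ln 2 / f(\epsilon)^2$, and the first requires $t \le 2\eta/\epsilon$. Both can be simultaneously satisfied precisely when $f(\epsilon)^2/\epsilon$ is large enough in terms of $\eta$, and the hypothesis $f(\epsilon)/\sqrt{\epsilon}\to\infty$ guarantees we can pick $\epsilon$ small enough (and then $t$ in the corresponding window) to make this work for any $\eta$.

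With these parameters chosen, I reduce instances of $SSE'_\delta(1-\epsilon,1-f(\epsilon))$ to instances of $SSE_\delta(1-\eta,\tfrac{1}{2})$ by sending $G \mapsto G^t$. Because $M$ is row-stochastic, the row sums of $DM^t$ equal the row sums of $D$, so degrees and hence volumes of every subset are preserved. For the completeness case, any set $S$ of $G$ with $\vol(S)\in[\delta/2,\delta]$ and $\phi_G(S)\le\epsilon$ still has $\vol_{G^t}(S)\in[\delta/2,\delta]$, and by \prettyref{lem:upperbound} satisfies $\phi_{G^t}(S)\le \tfrac{t}{2}\epsilon \le \eta$. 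For the soundness case, the $SSE'$ promise gives $\phi_G(8\delta)\ge f(\epsilon)$; applying \prettyref{lem:lowerbound} with parameter $\eta'=\tfrac{1}{2}$ (so that $4\delta/\eta' = 8\delta$) yields
\[
\phi_{G^t}(\delta) \ \ge\ \min\!\left(1-\Bigl(1-\tfrac{f(\epsilon)^2}{32}\Bigr)^{t},\ \tfrac{1}{2}\right) \ \ge\ \tfrac{1}{2},
\]
using the second constraint on $t$.

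The proof is essentially a parameter-chasing exercise once \prettyref{thm:expansionprofile} is in hand, so no single step is a serious obstacle. The only place care is needed is in reconciling the definitions: I must check that the $8\delta$ factor appearing in the soundness of $SSE'$ is precisely what is required to invoke \prettyref{lem:lowerbound} with the inner parameter $\eta'=\tfrac{1}{2}$ matching the desired output soundness, and I must verify that the asymptotic condition $f(\epsilon)/\sqrt{\epsilon}\to\infty$ is exactly the right quantitative condition to allow the completeness window $t\le 2\eta/\epsilon$ and the soundness threshold $t \gtrsim 1/f(\epsilon)^2$ to overlap.
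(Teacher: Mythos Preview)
Your proposal is correct and follows essentially the same argument as the paper: reduce $G\mapsto G^t$, use \prettyref{lem:upperbound} for completeness and \prettyref{lem:lowerbound} with inner parameter $\tfrac{1}{2}$ for soundness, and choose $\epsilon$ small enough that the window $32\ln 2/f(\epsilon)^2 \lesssim t \le 2\eta/\epsilon$ is nonempty. You are in fact slightly more explicit than the paper on two points the paper leaves implicit --- that volumes are preserved under $G\mapsto G^t$, and that the $8\delta$ in the $SSE'$ soundness is exactly $4\delta/\eta'$ for $\eta'=\tfrac{1}{2}$ --- so nothing is missing.
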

\begin{proof}
	Fix $\epsilon$ small enough so that
	$\frac{64\epsilon}{f(\epsilon)^2} \leq \eta$.  There exists such
	an $\epsilon$ since $\lim_{\epsilon \to 0} \frac{f(\epsilon)}
	{\sqrt{\epsilon}} \to \infty$.   Fix  $t = \frac{64}{f(\epsilon)^2}$.

	  Given an instance $G$ of $SSE'(1-\eps,1-f(\epsilon))$, the reduction just outputs the graph $G^t$
  obtained via $t$-step lazy random walks on $G$.  Since the adjacency matrix of $G'$ can be calculated with $\log t$ matrix multiplications, this reduction clearly runs in time $O(n^3\log t)$.

  \paragraph{Completeness}  If there exists a set of $S$ with $\vol(S)
  \in [\delta/2,\delta]$ and $\phi_G(S) \leq \epsilon$ then by \prettyref{lem:upperbound} the same
  set $S$ satisfies,
  $$ \phi_{G^t}(S) \leq \frac{t}{2} \phi_G(S)
   = \Theta\left(\frac{\epsilon}{f(\epsilon)^2}\right) \leq \eta$$

   \paragraph{Soundness}  If $\phi_G(8\delta) \geq f(\epsilon)$ then by
   applying \prettyref{lem:lowerbound}
   $$ \phi_{G^t}\left(\delta\right) \geq \min\left( 1 - \left(1-
   \frac{1}{32}f(\epsilon)^2\right)^t , \half\right) \geq \frac{1}{2}$$

\end{proof}

\bibliographystyle{amsalpha}
\bibliography{papers}

\newcommand{\etalchar}[1]{$^{#1}$}
\providecommand{\bysame}{\leavevmode\hbox to3em{\hrulefill}\thinspace}
\providecommand{\MR}{\relax\ifhmode\unskip\space\fi MR }
\providecommand{\MRhref}[2]{%
  \href{http://www.ams.org/mathscinet-getitem?mr=#1}{#2}
}
\providecommand{\href}[2]{#2}
\begin{thebibliography}{ALM{\etalchar{+}}98}

\bibitem[ABS10]{AroraBS10}
Sanjeev Arora, Boaz Barak, and David Steurer, \emph{Subexponential algorithms
  for unique games and related problems}, FOCS, 2010, pp.~563--572.

\bibitem[ALM{\etalchar{+}}98]{AroraLMSS98}
Sanjeev Arora, Carsten Lund, Rajeev Motwani, Madhu Sudan, and Mario Szegedy,
  \emph{Proof verification and the hardness of approximation problems}, JACM:
  Journal of the ACM \textbf{45} (1998).

\bibitem[ALN08]{AroraLN05}
Sanjeev Arora, James~R. Lee, and Assaf Naor, \emph{Euclidean distortion and the
  sparsest cut}, 1--21.

\bibitem[ARV04]{AroraRV04}
Sanjeev Arora, Satish Rao, and Umesh Vazirani, \emph{Expander flows, geometric
  embeddings and graph partitioning}, Proceedings of the thirty-sixth annual
  {ACM} Symposium on Theory of Computing ({STOC}-04) (New York), ACM Press,
  June ~13--15 2004, pp.~222--231.

\bibitem[DS13]{DinurS13}
Irit Dinur and David Steurer, \emph{Analytical approach to parallel
  repetition}, CoRR \textbf{abs/1305.1979} (2013).

\bibitem[GT12]{GharanT12}
Shayan~Oveis Gharan and Luca Trevisan, \emph{Approximating the expansion
  profile and almost optimal local graph clustering}, FOCS, 2012, pp.~187--196.

\bibitem[H\.01]{Hastad01}
Johann H\.astad, \emph{Some optimal inapproximability results}, Journal of the
  ACM \textbf{48} (2001), no.~4, 798--859.

\bibitem[Hol07]{Holenstein07}
Holenstein, \emph{Parallel repetition: Simplifications and the no-signaling
  case}, STOC: ACM Symposium on Theory of Computing (STOC), 2007.

\bibitem[KL14]{KwokL14}
Tsz~Chiu Kwok and Lap~Chi Lau, \emph{Personal communication}.

\bibitem[LR99]{LeightonR99}
Frank~Thomson Leighton and Satish Rao, \emph{Multicommodity max-flow min-cut
  theorems and their use in designing approximation algorithms}, J. ACM
  \textbf{46} (1999), no.~6, 787--832.

\bibitem[LRV13]{LouisRV13}
Anand Louis, Prasad Raghavendra, and Santosh Vempala, \emph{The complexity of
  approximating vertex expansion}, CoRR \textbf{abs/1304.3139} (2013).

\bibitem[OW12]{ODonnellW}
Ryan O'Donnell and David Witmer, \emph{Markov chain methods for small-set
  expansion}, Arxiv arXiv:1204.4688 (2012).

\bibitem[Rao08]{Rao08}
Anup Rao, \emph{Parallel repetition in projection games and a concentration
  bound}, STOC (Richard~E. Ladner and Cynthia Dwork, eds.), ACM, 2008,
  pp.~1--10.

\bibitem[Raz98]{Raz98}
Ran Raz, \emph{A parallel repetition theorem}, SIAM Journal on Computing
  \textbf{27} (1998), no.~3, 763--803.

\bibitem[RS10]{RaghavendraS10}
Prasad Raghavendra and David Steurer, \emph{Graph expansion and the unique
  games conjecture}, STOC, 2010, pp.~755--764.

\bibitem[RST12]{RaghavendraST12}
Prasad Raghavendra, David Steurer, and Madhur Tulsiani, \emph{Reductions
  between expansion problems}, IEEE Conference on Computational Complexity,
  2012, pp.~64--73.

\bibitem[Ste10]{SteurerThesis10}
David Steurer, \emph{On the complexity of unique games and graph expansion.},
  Ph.D. thesis, Princeton University, 2010.

\end{thebibliography}

\appendix

\section{Equivalence of Two Notions of the Small-Set Expansion Problem}
\label{app:equiv}

There is a slightly different version of the Small-Set expansion
decision problem that differs from \prettyref{def:sse} in the
soundness case.

\begin{defin}
\label{def:sse2}
For constants $0 < s < c < 1$, and $\delta > 0$,  the Small-Set
Expansion problem $SSE^{=}_\delta(c,s)$
 is defined as follows: Given a graph $G = (V,E)$  with $\vol(V) = N$, distinguish
between the following two cases:
\begin{itemize}
\item $G$ has a set of volume in the range $[\frac{1}{2} \delta N,  \delta N]$ with expansion less than $1 - c$
\item All sets in $G$ of volume in the range $[\frac{1}{4} \delta N, \delta N]$ have expansion at least $1 - s$.
\end{itemize}
\end{defin}

Clearly $SSE^{=}_{\delta}(c,s)$ is a harder decision problem than
$SSE_{\delta}(c,s)$ since the soundness assumption is weaker.
There is no known reduction from $SSE^{=}_{\delta}(c,s)$ to
$SSE_{\delta}(c,s)$ that establishes the equivalence of the two
versions. Here we observe that the search versions of these two
problems are equivalent.

\begin{prop}
For all $\delta_0,c,s > 0$
	A search algorithm for $SSE_{\delta}(2c-1,s)$ for  $ \delta \in [\delta_0/2,
	\delta_0]$  gives a search algorithm
for $SSE^{=}_{\delta}(c,s)$ in the range $\delta \in
[\delta_0/2,\delta_0]$.
\end{prop}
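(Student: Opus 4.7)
The plan is a reduction that accumulates, via repeated calls to the $SSE_\delta(2c-1,s)$ search oracle, a set whose volume lies in the window required by $SSE^=_\delta(c,s)$. First, I would observe that every YES instance of $SSE^=_\delta(c,s)$ — a graph $G$ containing some $S^*$ with $\vol(S^*)\in[\delta/2,\delta]$ and $\phi(S^*)\le 1-c$ — is automatically a YES instance of $SSE_\delta(2c-1,s)$, since $1-c \le 2(1-c)=1-(2c-1)$. Hence the hypothesized search algorithm may be invoked on $G$, returning some set $T$ with $\vol(T)\le\delta$ and $\phi(T)\le 1-s$. If $\vol(T)\ge\delta/4$ then we are done immediately: $T$ itself lies in $[\delta/4,\delta]$ and witnesses the $SSE^=$ output.

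Otherwise $\vol(T)<\delta/4$, and I would iterate by contracting $T$ into a single super-vertex of $G$, obtaining a graph $G_1$ on which the oracle is re-invoked. A direct edge-counting argument shows that after contraction the image of $S^*$ has volume at least $\vol(S^*)-\vol(T)\ge\delta/4$ and expansion at most
\[
\frac{\phi(S^*)\,\vol(S^*)}{\vol(S^*)-\vol(T)}\;\le\;\frac{1-c}{1/2}\;=\;2(1-c)\;=\;1-(2c-1).
\]
Hence the completeness promise of $SSE_{\delta'}(2c-1,s)$ continues to hold on $G_1$ for a suitable $\delta'\in[\delta_0/2,\delta_0]$ chosen adaptively to match the current residual volume. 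Iterating produces a disjoint union $U=T_1\cup T_2\cup\cdots$ of small non-expanding pieces; the standard averaging bound $\phi(A\cup B)\le(\phi(A)\vol(A)+\phi(B)\vol(B))/(\vol(A)+\vol(B))$ for disjoint $A,B$ shows $\phi(U)\le 1-s$. We halt as soon as $\vol(U)\ge\delta/4$, and since each individual $T_i$ contributes volume strictly less than $\delta/4$, the accumulated volume at termination lies in $[\delta/4,\delta/2]\subseteq[\delta/4,\delta]$, so $U$ is the desired witness.

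The main obstacle is making the contraction step airtight in full generality: the edge-counting sketched above is cleanest when $T\subseteq S^*$, and in the general case one must handle the possibility that $T$ straddles $S^*$ and its complement, where the relevant residual may be $S^*\setminus T$ rather than the contracted image of $S^*$, and boundary edges between $T\cap S^*$ and $T\cap\bar{S^*}$ need careful accounting. The slack $1-c\to 2(1-c)$ between the two completeness parameters is exactly what is consumed by this degradation, which is precisely why the oracle is hypothesized at the weakened parameter $2c-1$. Subsidiary care is needed in adaptively choosing $\delta'\in[\delta_0/2,\delta_0]$ so that the residual's volume lands in $[\delta'/2,\delta']$ — this is why the oracle assumption must hold over the full range $[\delta_0/2,\delta_0]$ rather than at a single $\delta$.
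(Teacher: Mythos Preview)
Your proposal is essentially the same iterative-accumulation strategy as the paper's, and your identification of the key point --- that the slack $1-c \to 2(1-c) = 1-(2c-1)$ is exactly the factor-of-two expansion blowup incurred when up to half of the witness $S^*$ has been peeled away --- matches the paper's argument precisely. The union bound $\phi(\bigcup T_i)\le 1-s$ and the observation that the accumulated volume cannot overshoot the window (since each piece has volume below $\delta/4$) are likewise the same as in the paper.

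The one substantive difference is that the paper \emph{deletes} the accumulated set rather than contracting it: at step $i$ it sets $G_i = G_{i-1}\setminus S_i$ and takes $S' \setminus S_{out}$ as the surviving witness in $G_i$. Deletion sidesteps exactly the obstacle you flag as ``the main obstacle,'' since in the deleted graph every boundary edge of $S'\setminus S_{out}$ was already a boundary edge of $S'$ in $G$; hence
\[
\phi_{G_i}(S'\setminus S_{out}) \;\le\; \frac{\phi_G(S')\,\vol(S')}{\vol(S')-\vol(S_{out})}\;\le\; 2\,\phi_G(S')
\]
with no case analysis on whether the removed pieces straddle $S'$. Under contraction the super-vertex can introduce new boundary edges (those internal to $S^*$ that now touch the super-vertex), which is precisely the accounting you worry about; but as you yourself note, the fix is to pass to $S^*\setminus T$ as the residual, at which point the contracted super-vertex is irrelevant and you are effectively doing deletion anyway. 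So the paper's choice simply short-circuits the detour.
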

\begin{proof}
Suppose we are given an algorithm $A$ that finds a set $S'$ of volume at
most $\delta N $ and expansion less than $1-s$ whenever there exists a
set $S$ with $\vol(S) \in [\frac{1}{4}\delta N, \delta N]$ and $\Phi(S) \leq 2-2c$.
We construct a set $S \subseteq V$ such that
$\vol(S) \in [\frac{1}{4}
\delta N, \delta N ]$ and $\phi(S) < 1-s$. We proceed iteratively, as follows.

We start with an empy initial set, $S_{out}$, and with the full graph,
$G_0 = G$. If $\vol(S_{out}) \in  [\frac{1}{4} \delta N, \delta N]$, we
terminate and return $S_{out}$. Otherwise, at the $i$th step, we apply
$A$ to $G_{i-1}$ to obtain a set $S_i$ of expansion less than $1-s$.
If $\vol(S_i) \in [\frac{1}{4}\delta N, \delta N]$ return $S_i$,
otherwise add the vertices in $S_i$ to $S_{out}$. We then set $G_i = G_{i - 1} \setminus S_i$. If no such set can be found, then we terminate and return no.

Clearly, this algorithm terminates and runs in polynomial time.  Suppose $S'$ is a nonexpanding set
with $\vol(S') \in [\frac{1}{2} \delta N, \delta N]$. As long as $S_{out}$
has volume smaller than $\frac{1}{4} \delta N$, $S'-S_{out}$ will have volume
at least $\vol(S')/2$ and has expansion at most $2\phi(S') \leq 2 - 2c$.
Hence by the assumption about algorithm $A$, it will return a set $S_i$ of
expansion at most $1-s$.
The check of the volume of $S_i$ ensures that $S_{out}$ will never go from below
the allowable volume range to above in a single step.
Finally if $S_i$ was never returned for any step $i$,
the union of all the sets $S_i$ has
expansion at most $1-s$ and volume in the range $[\delta N/4,\delta N]$.

\end{proof}

\section{Reduction from Irregular Graphs to Regular Graphs}
\label{app:irregToReg}
In this section, we present a reduction from small set expansion on irregular graphs to small set expansion on regular graphs. Specifically, we prove the following theorem.
\begin{thm}
\label{thm:irregToReg}
There exists  an absolute constant $C$ such that for all
$\gamma, \beta \in (0,1)$ there is a polynomial time reduction from
$SSE_\delta(1- \gamma, 1 -
\beta)$ on a irregular graph $G = (V,E)$ to $SSE_{\delta}(1 -
\gamma, 1-\nfrac{\beta}{C})$ on a $4$-regular graph $G' = (V', E')$
\end{thm}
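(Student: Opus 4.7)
The plan is to apply a standard ``cloud replacement'' degree-reduction gadget. For each vertex $v \in V$ we create a cloud $C_v$ of $d(v)$ fresh vertices, connect them internally by a $3$-regular constant-degree expander of edge expansion $\lambda$ bounded away from $0$ (an explicit Ramanujan construction suffices), and for each edge $\{u,v\} \in E$ place a single ``external'' edge between a port in $C_u$ and a port in $C_v$ in such a way that every cloud vertex is incident to exactly one external edge. The resulting graph $G'$ is $4$-regular on $N = \vol_G(V)$ vertices and is constructible in polynomial time. Corner cases of very small $d(v)$ are handled by first padding $G$ with self-loops so that every vertex has degree at least some absolute constant, which affects expansion only up to a constant factor.

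For completeness, given $S \subseteq V$ with $\vol_G(S) \in [\delta N/2, \delta N]$ and $\phi_G(S) \leq \gamma$, lift it to $S^{\uparrow} = \bigcup_{v \in S} C_v$. Then $\vol_{G'}(S^{\uparrow}) = 4\vol_G(S)$, so its fractional volume in $G'$ matches that of $S$ in $G$, and the edges leaving $S^{\uparrow}$ in $G'$ are in bijection with the edges in $E_G(S,\bar S)$. Hence $\phi_{G'}(S^{\uparrow}) = \phi_G(S)/4 \leq \gamma$, so completeness $1-\gamma$ transfers with slack.

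The main obstacle is the soundness direction, where we need to round a set $T \subseteq V'$ of fractional volume at most $\delta$ with $\phi_{G'}(T) < \beta/C$ back to a set $S \subseteq V$ with $\vol_G(S) \leq \delta N$ and $\phi_G(S) < \beta$. Let $x_v = |T \cap C_v|/d(v) \in [0,1]$, set $S = \{v : x_v \geq 1/2\}$, and define the ``disagreement'' quantities $A = \sum_{v \in S} d(v)(1-x_v)$ and $B = \sum_{v \notin S} d(v) x_v$. The internal expander in each $C_v$ contributes at least $\lambda \min(x_v, 1-x_v) d(v)$ edges to the cut of $T$, so the internal part of the cut is at least $\lambda(A+B)$, giving $A+B \leq \phi_{G'}(T)\vol_{G'}(T)/\lambda$. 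A direct counting argument then yields $|E_G(S,\bar S)| \leq e_{\mathrm{ext}}(T) + A + B$, because any edge in $E_G(S,\bar S)$ whose corresponding external edge in $G'$ is not cut by $T$ must either have an endpoint lying in $T \cap C_v$ for some $v \notin S$ (contributing to $B$) or an endpoint lying in $C_u \setminus T$ for some $u \in S$ (contributing to $A$). Meanwhile, $\vol_G(S) = \vol_{G'}(T)/4 + A - B \geq \vol_{G'}(T)/4 - (A+B)$, which is at least $\vol_{G'}(T)/8$ once $\phi_{G'}(T) \leq \lambda/2$.

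Combining these estimates, $\phi_G(S) \leq (e_{\mathrm{ext}}(T) + A + B)/\vol_G(S) = O(1/\lambda) \cdot \phi_{G'}(T)$, so taking $C$ to be a sufficiently large constant multiple of $1/\lambda$ makes the soundness parameter $\beta/C$ in $G'$ translate back to $\beta$ in $G$. The fractional volume bound $\vol_G(S) \leq \delta N$ follows from $\vol_G(S) \leq \vol_{G'}(T)/4 + A$, with any small overshoot absorbed by a tiny rescaling of $\delta$ (or by trimming a negligible amount of volume). The technically delicate step is the bound $A + B \leq e_{\mathrm{int}}(T)/\lambda$, which is exactly where the internal expander strength $\lambda$ enters and fixes the universal constant $C$.
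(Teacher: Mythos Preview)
Your proposal is correct and follows essentially the same approach as the paper: the identical cloud/expander gadget construction, the same lifting argument for completeness, and the same majority-threshold rounding $S = \{v : x_v \ge 1/2\}$ for soundness, with the internal expander bounding the total disagreement $A+B$ (which is exactly the paper's symmetric difference $|S' \triangle S^*|$). The only cosmetic difference is that the paper bounds $\phi_{G'}(S^*)$ for the lifted rounded set and then reads off $\phi_G(S)$, whereas you bound $|E_G(S,\bar S)|$ and $\vol_G(S)$ directly; both routes use the same estimates and arrive at the same $O(1/\lambda)\cdot\phi_{G'}(T)$ conclusion, and both incur the same constant-factor slack in the volume parameter.
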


\begin{proof}
The reduction is as follows: we replace each vertex $v \in V$ with a
$3$-regular expander $A_v$ on $\text{deg}(v)$ vertices. Using standard
constructions of 3-regular expanders, we can assume that the graphs
$A_v$ have edge expansion at least $\kappa = 0.01$.  Now, for each edge $(v,w) \in E$, we add an edge between a particular vertex in $A_v$ and $A_w$. The resulting graph on the expanders is $G'$, with $V' = \cup_{v \in V} A_v$. Note that $G'$ is $d$-regular, and that $|V'| = \sum_{v \in V} \text{deg}(v) = \vol(V)$, as desired.

For the completeness, we note that if a set $S \subset V$ with volume
at most $\delta |V|$ has $\phi_G(S) < \gamma$, then the set $S' =
\cup_{v \in S} A_v$ has the same number of edges leaving the set as
$S$, and the number of vertices in the set is equal to $\vol(S)$.
Thus, $\phi_{G'}(S') < \gamma/4$, as desired.

For soundness, suppose there is a set $S' \subset V'$ with $|S'| \le
\delta |V'|$ and $\phi_{G'}(S') < \beta$. Then we can partition $S'$
into sets corresponding to each $A_v$; let $B_v = S' \cap A_v$. Then consider the set
\[
S^* = \cup_{|B_v| \ge \frac{1}{2}|A_v|} A_v,
\]
the set of $A_v$ that overlap with $S'$ by at least half.
We will argue that $S^*$ has expansion at most $\frac{10}{\kappa}\beta$ in $G'$. First, by definition of expansion we have
\[
\beta \ge \phi_{G'}(S') = \frac{\sum_{v} E(B_v, \bar{S'})}{ 4 \sum_{v
\in V} |B_v|} =  \frac{\sum_{v} E(B_v, A_v \setminus B_v) + E(B_v,
\bar{S'}\setminus A_v)}{ 4 \sum_{v\in V} |B_v|},
\]
where we distinguish between boundary edges of $S'$ inside and outside of the $A_v$. In particular, we have
\[
4 \beta \sum_{v \in V} |B_v| \ge \sum_{v \in V} E(B_v, A_v \setminus B_v) .
\]
Now, we bound from below the number of boundary edges within $A_v$. Since $A_v$ is an expander with expansion $\kappa$, we have
\[
 E(B_v,\ A_v \setminus B_v) \ge \kappa \cdot \min(|B_v|, |A_v \setminus B_v|).
\]
Hence we will have,
$$ S' \symdiff S^* = \sum_{v} \min(|B_v|, |A_v \setminus B_v|) \leq
\frac{1}{\kappa} \sum_v E(B_v, A_v\setminus B_v) \leq
\frac{4\beta}{\kappa} \sum_{v \in V} |B_v| = \frac{4\beta}{\kappa} |S'|$$
Since $G'$ is a $4$-regular graph, we can upper bound the expansion of
 $S^*$ by
\[
	\phi_{G'}(S^*) \le  \frac{E[S',\bar{S'}] + 4 |S' \symdiff
S^*|}{ 4 |S'| -
	4 |S' \symdiff S^*|} \le \frac{4 \beta |S'| + 16
		\nfrac{\beta}{\kappa} |S'|}{ 4 |S'| -
			16 \nfrac{\beta}{\kappa} |S'|}
	\leq
	\frac{ \beta \left(1 +
		\nfrac{4}{\kappa}\right)}{ 1- \nfrac{4\beta }{\kappa} }
		\]
Thus, in $G$ the set $S = \{v \ | \ A_v \in S^*\}$ has expansion at
most $ \frac{10}{\kappa} \beta$, and $\vol(S) \in [\frac{1}{2}\delta\vol(V), 2 \delta\vol(V)]$, as desired.
\end{proof}

\section{Discussion of Unique Games}
\label{app:UG}

The following simple example illustrates why a similar reduction will
not work for Unique Games. Consider a unique games instance with two
disconnected components, one of size $s\cdot n $ in which only
$\frac{1}{q}$ of the constraints are satisfiable, and another component
of size $(1-s)n$ in which all constraints are perfectly satisfiable.
In this case, running a random walk on the label-extended graph
will not alter the number of satisfiable constraints.

A slight modification of this approach would at first seem to be a promising
avenue for overcoming this example at first: reweight the graph of
constraints by $(1-w)$, and add to it an expander of weight $w$
with arbitrary constraints. Now, in the previous example, following the
proof of our completess case we observe that after $t$ steps
of the random walk, the formerly perfectly satisfiable component has
expansion at most $tw$, the soundness is at most $1 - t(w + rs)$.
Similarly, in the completeness case, our reduction goes from
completeness $1-c$ to soundness $1-t(c+w)$. Thus, because the added constraints
contribute equally to the soundness of the bad example and the completeness
in general, this approach does not overcome the problem
of isolated components.

\end{document}